\newtheorem{theorem}{Theorem}[section]
\newtheorem{corollary}[theorem]{Corollary}
\newtheorem{claim}{Claim}
\newtheorem{lemma}[theorem]{Lemma}
\def\cqedsymbol{\ifmmode$\lrcorner$\else{\unskip\nobreak\hfil
\penalty50\hskip1em\null\nobreak\hfil$\lrcorner$
\parfillskip=0pt\finalhyphendemerits=0\endgraf}\fi} 
\newcommand{\cqed}{}
\newcommand{\Ga}{\mathcal{G}}
\newcommand{\strat}{\mathcal{S}}
\newcommand{\QBF}{\textsc{3-QBF}}
\newcommand{\true}{{\sf T}}
\newcommand{\false}{{\sf F}}
\newcommand{\quickset}[1]{\left\lbrace #1 \right\rbrace}
\newcommand{\segment}[2]{\llbracket #1, #2 \rrbracket}
\newcommand{\ind}{\textup{ind}}
\title{\LARGE Maker-Maker games of rank~$4$ are {\sf PSPACE}-complete}
\author[1]{\Large Florian Galliot}
\author[2]{\Large Jonas S\'enizergues}
\affil[1]{\large Aix-Marseille Université, CNRS, Centrale Marseille, I2M, UMR 7373, 13453 Marseille, France}
\affil[2]{\large Univ. Bordeaux, CNRS, Bordeaux INP, LaBRI, UMR 5800, F-33400 Talence, France}
\date{}
\begin{document}

\maketitle

\begin{abstract}
    The Maker-Maker convention of positional games is played on a hypergraph whose edges are interpreted as winning sets. Two players take turns picking a previously unpicked vertex, aiming at being first to pick all the vertices of some edge. Optimal play can only lead to a first player win or a draw, and deciding between the two is known to be {\sf PSPACE}-complete even for $6$-uniform hypergraphs. We establish {\sf PSPACE}-completeness for hypergraphs of rank~$4$. As an intermediary, we use the recently introduced achievement positional games, a more general convention in which each player has their own winning sets (blue and red). We show that deciding whether the blue player has a winning strategy as the first player is {\sf PSPACE}-complete even with blue edges of size~$2$ or~$3$ and pairwise disjoint red edges of size~$2$. The result for hypergraphs of rank~$4$ in the Maker-Maker convention follows as a simple corollary.
\end{abstract}

\section{Introduction}\strut
\indent\textbf{Positional games.} \textit{Positional games} have been introduced by Hales and Jewett \cite{Hales1963} and later popularized by Erd\H{o}s and Selfridge~\cite{erdos}. The game board is a hypergraph $H=(V,E)$, where $V$ is the vertex set and $E \subseteq 2^V$ is the edge set. Two players take turns picking a previously unpicked vertex, and the result of the game is defined by one of several possible \textit{conventions}. The study of positional games mainly consists, for various conventions and classes of hypergraphs, in finding necessary and/or sufficient conditions for such player to have a winning strategy (on the number of edges, the size of the edges, the structure of the hypergraph...), and determining the complexity of the corresponding algorithmic problems. The two most popular conventions are called \textit{Maker-Maker} and \textit{Maker-Breaker}. These are games of an ``achievement'' nature, in the sense that one or both players are trying to \textit{fill} any edge (\textit{i.e.} pick all the vertices of some edge of the hypergraph).

The Maker-Maker convention was the first one to be introduced in all generality, by Hales and Jewett in 1963 \cite{Hales1963}. It is the most natural convention: the edges are the winning sets, meaning that whoever first fills an edge wins the game. If no one has filled an edge by the time all vertices are picked, we get a draw. A well-known \textit{strategy-stealing} argument \cite{Hales1963} shows that optimal play leads to one of two possible outcomes: a first player win or a draw. The most famous example of a Maker-Maker game is the \textit{tic-tac-toe} game on a $3 \times 3$ grid, which can be generalized into the \textit{$k$-in-a-row} game on grids of any size\cite{Bec08}. For instance, the $k$-in-a-row game on an infinite grid is known to be a first player win for $k \leq 4$ but a draw for $k \geq 8$ \cite{Zetters}, while the cases $k \in \{5,6,7\}$ are important open problems. In the Maker-Maker convention, both players must manage offense and defense at the same time, by trying to fill an edge while also preventing the opponent from doing so first. This creates counterintuitive phenomena such as the \textit{extra edge paradox} \cite{Bec08}, where adding an edge might turn a first player win into a draw. As such, the Maker-Maker convention is notoriously difficult to handle.

The Maker-Breaker convention was introduced for that reason, by Chv\'atal and Erd\H{o}s in 1978 \cite{CE78}. The two players are called Maker and Breaker: Maker's goal is to fill an edge, while Breaker's goal is to prevent Maker from filling an edge. Draws are impossible. The fact that the players have complementary goals brings convenient additional properties compared to Maker-Maker games, and as such, the Maker-Breaker convention is the most studied. A crucial property is subhypergraph monotonicity: if Maker has a winning strategy on a subhypergraph of $H$, then he may disregard the rest of the hypergraph and apply that strategy to win on $H$, as he has no defense to take care of. An example of a Maker-Breaker game is the board game \textit{Hex}, where two players take turns placing tokens of their color to try and connect opposite sides of the board. It is well-known \cite{gardner} that a draw is technically impossible, and that successfully connecting one's sides of the board is equivalent to blocking the opponent from connecting theirs. This makes Hex a Maker-Breaker game, but only by theorem. Purely by definition, it does not fall under either Maker-Maker or Maker-Breaker conventions, as the first player to fill an edge wins but the players have different edges to fill (corresponding here to ``horizontal paths'' and ``vertical paths'' respectively).

\bigskip

\indent\textbf{Unified achievement games.} The more general family of \textit{achievement positional games} has recently been introduced \cite{JonasFlorian}. Such a game is a triple $\Ga=(V,E_L,E_R)$, where $(V,E_L)$ and $(V,E_R)$ are hypergraphs which are seen as having \textit{blue edges} and \textit{red edges} respectively. There are two players, Left and Right, taking turns picking a previously unpicked vertex: Left aims at filling a blue edge, while Right aims at filling a red edge. Whoever reaches their goal first wins the game, or we get a draw if this never happens. Achievement positional games include all Maker-Maker and Maker-Breaker games. Indeed, Maker-Maker games correspond to the case $E_L=E_R$, while Maker-Breaker games correspond to the case $E_R=\varnothing$ when identifying Maker with Left and Breaker with Right and renaming Breaker wins as draws. Another way to embed Maker-Breaker games into achievement positional games is to define $(V,E_R)$ as the transversal hypergraph of $(V,E_L)$, meaning that $E_R$ is the set of minimal subsets of vertices that intersect every element of $E_L$.

Some principles which are common to the Maker-Maker and Maker-Breaker conventions, such as strategy stealing and pairing strategies, generalize to achievement positional games \cite{JonasFlorian}. One of the motivations behind the introduction of achievement positional games, which is particularly relevant to our paper, is that the family of Maker-Maker games is not stable under the players' moves. Indeed, even though the winning sets are the same for both players at the start, any edge which contains a vertex picked by some player during the game can thereafter only be filled by that same player. As such, mid-game positions of Maker-Maker games fall under the wider category of achievement positional games.

\bigskip

\indent\textbf{Algorithmic complexity.} The decision problem corresponding to the Maker-Maker convention asks whether the first player has a winning strategy, while the decision problem corresponding to the Maker-Breaker convention asks whether Maker has a winning strategy as the first player. The algorithmic complexity of positional games is usually studied depending on the size of the edges. A hypergraph is of \textit{rank}~$k$ (resp. is \textit{$k$-uniform}) if all its edges have size at most~$k$ (resp. exactly~$k$).

The founding result of Schaefer states that Maker-Breaker games of rank~$11$ are {\sf PSPACE}-complete \cite{schaefer}. This was improved much later to $6$-uniform Maker-Breaker games \cite{MBrank6}, and very recently to 5-uniform Maker-Breaker games \cite{MBrank5}. On the other hand, Maker-Breaker games of rank~$3$ are solved in polynomial time \cite{MBrank3}.

As for Maker-Maker games, apart from the obvious remark that games of rank~$2$ are solved in polynomial time, no direct complexity results have been obtained. However, a simple argument due to Byskov \cite{byskov} reduces Maker-Breaker games of rank~$k$ to Maker-Maker games of rank~$k+1$. It is not difficult to adapt this construction to reduce $k$-uniform Maker-Breaker games to $(k+1)$-uniform Maker-Maker games. Therefore, the known results for the Maker-Breaker convention imply that $6$-uniform Maker-Maker games are {\sf PSPACE}-complete. However, using Maker-Breaker games to get complexity results on Maker-Maker games has its limitations: for instance, since Maker-Breaker games of rank~$3$ are tractable, it would be impossible to show that Maker-Maker games of rank~$4$ are {\sf PSPACE}-complete with this method.

Achievement positional games offer an alternative approach, as any starting position of an achievement positional game can be interpreted as a mid-game position of some Maker-Maker game (even more specifically, as a position obtained after just one round of play). The decision problem introduced in \cite{JonasFlorian} asks whether Left has a winning strategy as the first player. Table \ref{tab:results} sums up all previously known complexity results for that problem, depending on the sizes~$p$ and~$q$ of the blue and red edges respectively. In particular, the case $(p,q)=(3,3)$ being {\sf PSPACE}-complete implies that deciding the outcome of a Maker-Maker game of rank~$4$ after one round of play is {\sf PSPACE}-complete \cite{JonasFlorian}. Unfortunately, this does not settle the complexity of Maker-Maker games of rank~$4$, as this first round of play may not be optimal.

\begin{table}[h]
\begin{center}
\begin{tabular}{|c|c|c|c|c|c|} \hline
\diagbox{$q$}{$p$} & 0 , 1 & 2 & 3 & 4 & 5+ \\
\hline
0 , 1 & \makecell{{\sf LSPACE} \\{} \small [trivial]} & \makecell{{\sf LSPACE} \\{} \small \cite{RW20}} & \makecell{{\sf P} \\{} \small \cite{MBrank3}} & open & \makecell{{\sf PSPACE}-c \\{} \small \cite{MBrank5}} \\
\hline
2 & \makecell{{\sf LSPACE} \\{} \small [trivial]} & \makecell{{\sf P} \\{} \small \cite{JonasFlorian}} & \makecell{{\sf NP}-hard \\{} \small \cite{JonasFlorian}} & \makecell{{\sf NP}-hard \\{} \small \cite{JonasFlorian}} & \makecell{{\sf PSPACE}-c \\{} \small \cite{MBrank5}} \\
\hline
3+ & \makecell{{\sf LSPACE} \\{} \small [trivial]} & \makecell{{\sf coNP}-c \\{} \small \cite{JonasFlorian}} & \makecell{{\sf PSPACE}-c \\{} \small \cite{JonasFlorian}} & \makecell{{\sf PSPACE}-c \\{} \small \cite{JonasFlorian}} & \makecell{{\sf PSPACE}-c \\{} \small \cite{MBrank5}} \\
\hline
\end{tabular}
\end{center}
\caption{Previously known results for the algorithmic complexity of deciding whether Left has a winning strategy as the first player, with blue edges of size at most~$p$ and red edges of size at most~$q$.}\label{tab:results}
\end{table}


\indent\textbf{Our contribution.} We show that the case $(p,q)=(3,2)$ of achievement positional games is {\sf PSPACE}-complete, even when the red edges are restricted to be pairwise disjoint. Such a game can also be interpreted as the position obtained after one round of play in a Maker-Maker game of rank~$4$, but the major difference is that this round of play is actually optimal, so that our result implies {\sf PSPACE}-completeness for (starting positions of) Maker-Maker games of rank~$4$.

In Section \ref{section2}, after recalling some definitions and basic results from \cite{JonasFlorian}, we state the {\sf PSPACE}-completeness result for $(p,q)=(3,2)$ and we show that it implies {\sf PSPACE}-completeness for Maker-Maker games of rank~$4$. Section \ref{section3} is then dedicated to the proof of the result for $(p,q)=(3,2)$. Finally, Section \ref{section4} concludes the paper and lists some perspectives.

\section{Preliminaries and statements of the main results}\label{section2}

\subsection{Definitions}\strut
\indent We start by recalling some definitions from \cite{JonasFlorian}. In this paper, a \textit{hypergraph} is a pair $(V,E)$ where $V$ is a finite \textit{vertex set} and $E \subseteq 2^V \setminus \{\varnothing\}$ is the \textit{edge set}. An \textit{achievement positional game} is a triple $\Ga=(V,E_L,E_R)$ where $(V,E_L)$ and $(V,E_R)$ are hypergraphs. The elements of $E_L$ are called \textit{blue edges}, whereas the elements of $E_R$ are called \textit{red edges}. Two players, Left and Right, take turns picking a vertex in $V$ that has not been picked before. We say a player \textit{fills} an edge if that player has picked all the vertices of that edge. The blue and red edges can be seen as the winning sets of Left and Right respectively, so that the result of the game is determined as follows:
\begin{itemize}[noitemsep,nolistsep]
    \item If Left fills a blue edge before Right fills a red edge, then Left wins.
    \item If Right fills a red edge before Left fills a blue edge, then Right wins.
    \item If none of the above happens before all vertices are picked, then the game is a draw.
\end{itemize}

The player who starts the game can be either Left or Right. Therefore, when talking about winning strategies (\textit{i.e.} strategies that guarantee a win) or non-losing strategies (\textit{i.e.} strategies that guarantee a draw or a win), we will always specify which player is assumed to start the game. For instance, we may say that ``Left has a winning strategy on $\Ga$ as the first player''.

After Left picks a vertex $u$, any blue edge $e$ that contains $u$ behaves like $e \setminus \{u\}$, in the sense that Left only has to pick the vertices in $e \setminus \{u\}$ to fill $e$. Moreover, after Left picks a vertex $u$, any red edge $e'$ that contains $u$ is ``dead'' and can be ignored for the rest of the game, as Right will never be able to fill $e'$. Of course, analogous observations can be made for Right. Therefore, from a starting achievement positional game $\Ga=(V,E_L,E_R)$ on which Left has picked a set of vertices $V_L$ and Right has picked a set of vertices $V_R$, we get a fresh new achievement positional game $\Ga'=(V',E'_L,E'_R)$ where:
\begin{align*}
    V' & = V \setminus (V_L \cup V_R); \\
    E'_L & = \{ e \setminus V_L \mid e \in E_L, e \cap V_R = \varnothing\}; \\
    E'_R & = \{ e \setminus V_R \mid e \in E_R, e \cap V_L = \varnothing\}.
\end{align*}
We may refer to $\Ga'$ as the \textit{updated game}, to the elements of $E'_L$ as the \textit{updated blue edges} and to the elements of $E'_R$ as the \textit{updated red edges}.

For example, let $\Ga=(\{b,b^1,b^2,b^3,b^4,b^5,b^6,d\},\{\{b,b^1,b^2\},\{b,b^2,b^3\},\{b,b^4,b^5\},\{b,b^5,b^6\}\},\{\{b,d\}\})$, as pictured in Figure \ref{fig:butterflybomb}. First, suppose that Left starts. Left can pick $b$ as his first move, which yields the updated game $(\{b^1,b^2,b^3,b^4,b^5,b^6,d\},\{\{b^1,b^2\},\{b^2,b^3\},\{b^4,b^5\},\{b^5,b^6\}\},\varnothing)$. Note that Left cannot lose anymore. By symmetry, assume that Right picks one of $b^1$, $b^2$ or $b^3$. Left picks $b^5$, thus ensuring to win the game with his next move by picking either $b^4$ or $b^6$. Now, suppose that Right starts. Right can pick $b$ as her first move, which yields the updated game $(\{b^1,b^2,b^3,b^4,b^5,b^6,d\},\varnothing,\{\{d\}\})$. Left is forced to pick $d$, otherwise Right would pick $d$ herself next and win. In conclusion, we see that Left has a winning strategy as the first player, but we get a draw if Right plays first.

\begin{figure}[h]
    \centering
    \includegraphics[scale=.45]{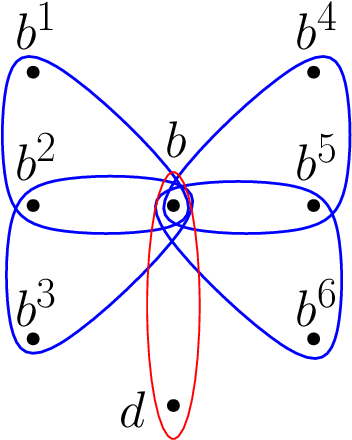}
    \caption{A blue \textit{butterfly} plus a red edge.}\label{fig:butterflybomb}
\end{figure}

\subsection{Elementary strategic principles}\strut
\indent We recall two basic results from \cite{JonasFlorian} which will be useful. A \textit{complete pairing} of a hypergraph $H=(V,E)$ is a set $\Pi$ of pairwise disjoint pairs of vertices such that, for all $e \in E$, there exists $\pi \in \Pi$ satisfying $\pi \subseteq e$.

\begin{lemma}[Pairing Strategy]\label{lem:pairing}\textup{\cite[Lemma 3.5]{JonasFlorian}}
    Let $\Ga=(V,E_L,E_R)$ be an achievement positional game. If the hypergraph $(V,E_R)$ (resp. $(V,E_L)$) admits a complete pairing, then Left (resp. Right) has a non-losing strategy on $\Ga$ both as the first player and as the second player.
\end{lemma}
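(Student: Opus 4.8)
The plan is to exhibit an explicit strategy for Left built from the complete pairing and to show, via a single invariant maintained throughout the game, that Right can never fill a red edge. By the obvious symmetry (swapping the two players together with $E_L$ and $E_R$) the parenthetical claim for Right follows at once, so I would treat only the case where $(V,E_R)$ admits a complete pairing $\Pi$.

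For the strategy, Left plays as follows. On each of his turns he looks at Right's most recent move $v$ (if Left is the first player, his very first move is exempt and may be arbitrary). If $v$ belongs to some pair $\pi=\{v,w\}\in\Pi$ whose partner $w$ is still unpicked, Left picks $w$; in every other case (either $v$ lies in no pair, or its partner is already picked, or Left is making his initial first-player move) Left picks an arbitrary unpicked vertex. Because the pairs in $\Pi$ are pairwise disjoint, the partner $w$ is unambiguous whenever it is invoked.

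I would then prove, by induction on the number of moves played, that after each of Left's moves the following holds: $(\star)$ for every pair $\pi\in\Pi$, if Right owns a vertex of $\pi$ then Left owns the other vertex of $\pi$. The base case is immediate since Right owns nothing initially. For the inductive step I would consider a move of Right picking $v$ followed by Left's response, and note that the only pair whose status is threatened is a pair $\pi=\{v,w\}$ containing $v$. Since $(\star)$ held beforehand, Right owned no vertex of $\pi$: it cannot have owned $v$, and had it owned $w$ then $(\star)$ would have forced Left to own $v$, contradicting that $v$ was just available. Hence $w$ is owned either by nobody or by Left; if $w$ is unpicked the rule makes Left grab it (which is legal exactly because $w$ is then available), restoring $(\star)$ for $\pi$, and if $w$ was already Left's then $(\star)$ already holds and Left's free move is harmless. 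No other pair is affected. With $(\star)$ established, the conclusion is short: if Right ever filled a red edge $e\in E_R$, then by completeness of $\Pi$ some pair $\pi\in\Pi$ satisfies $\pi\subseteq e$, so Right would own both vertices of $\pi$, contradicting $(\star)$. Thus Right never fills a red edge, never wins, and the outcome is a draw or a Left win—precisely a non-losing strategy, valid whether Left moves first or second.

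The delicate point, and the one I would set up the induction carefully to handle, is the interlocking of \emph{well-definedness} of Left's move with \emph{preservation} of the invariant. The response rule tells Left to seize the partner $w$, and one must be certain $w$ is free exactly when the rule calls for it; a priori one might fear that Right had already taken both ends of a pair before Left could react. This cannot happen, but only because of $(\star)$ itself, so the availability of $w$ and the maintenance of $(\star)$ must be proved in the same inductive step rather than separately.
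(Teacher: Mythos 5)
Your proof is correct. Note that the paper itself does not prove this lemma---it is recalled verbatim from \cite[Lemma 3.5]{JonasFlorian}---and your argument (respond to Right's move by taking the partner prescribed by the pairing, maintain the invariant that Right never owns a pair's vertex without Left owning its partner, and handle well-definedness and invariant preservation in the same inductive step) is exactly the standard pairing-strategy proof that the cited result rests on, including the correct treatment of the endgame case where Right's move is the last one.
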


\begin{lemma}[Greedy Move]\label{lem:greedy}\textup{\cite[Lemma 3.8]{JonasFlorian}}
    Let $\Ga=(V,E_L,E_R)$ be an achievement positional game. Suppose that there is no edge of size~$1$, but there is a blue (resp. red) edge $\{u,v\}$ such that, for all $e \in E_L \cup E_R: u \in e \implies v \in e$. Then it is optimal for Left (resp. Right) as the first player to start by picking $v$, which forces the opponent to answer by picking $u$.
\end{lemma}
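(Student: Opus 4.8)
I will prove the statement for a blue edge $\{u,v\}$; the red case is identical after exchanging the two players and the two colours. Write $\mathrm{val}(\Ga)$ for the value to Left (win, draw, or loss) with Left to move. It suffices to establish two things: (a) after Left picks $v$, Right must answer $u$ on pain of losing at once, and (b) picking $v$ achieves $\mathrm{val}(\Ga)$, i.e. $v$ is an optimal first move. For (a): once Left picks $v$, the blue edge $\{u,v\}$ updates to the singleton $\{u\}$, so Left threatens to fill a blue edge on his next move by taking $u$. Since $\Ga$ has no edge of size~$1$, every updated red edge still has size at least~$2$ after a single move, so Right cannot fill a red edge with her reply and has no immediate win of her own; hence she must destroy Left's threat, and the only vertex whose removal does so is $u$. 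This forces Right to answer $u$.

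The engine for (b) is a domination argument based on the transposition $s=(u\,v)$. The hypothesis "$u\in e\Rightarrow v\in e$ for all $e\in E_L\cup E_R$'' says that any edge meeting $u$ contains both $u$ and $v$, so $s$ fixes every edge through $u$ and only moves edges containing $v$ but not $u$. I first record the \emph{fundamental comparison}: for Left, picking $v$ is at least as good as picking $u$. Run two parallel plays, an imaginary one in which Left has picked $u$ (playing an optimal strategy from there) and the real one in which Left has picked $v$; in the real play Left answers every Right move by the $s$-image of the move an optimal strategy would make against the $s$-image of Right's move. This keeps real~Left's owned set equal to $s(\text{imaginary Left's owned set})$ and real~Right's equal to $s(\text{imaginary Right's})$ throughout, and since Right can never take $v$ in the real play, Right never takes $u$ in the imaginary one, so both plays are legal with no collisions. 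If imaginary~Left fills a blue edge $f$ at time $t$, then $s(f)\subseteq$ imaginary Left's vertices (using that she owns $u$ from the start when $v\in f$, and that $s(f)=f$ when $u\in f$), whence $f\subseteq$ real~Left's vertices at time $t$: Left fills $f$ no later. Symmetrically, any red edge $g$ that real~Right fills avoids $v$ (she never owns it), hence avoids $u$, so $s(g)=g$ and imaginary~Right fills $g$ no sooner. Thus the real outcome dominates the imaginary one, giving $\mathrm{val}(C_v)\ge \mathrm{val}(C_u)$, where $C_x$ denotes the position after Left's single move $x$.

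To promote this to full optimality I would use the forcing. By (a), playing $v$ leads to the position $B$ in which Left owns $v$, Right owns $u$, and Left is to move, and since any non‑$u$ reply loses for Right, the value of Left's move $v$ equals $\mathrm{val}(B)$; so I must show $\mathrm{val}(B)\ge \mathrm{val}(\Ga)$. Updating $\Ga$ by these two moves identifies $B$ with the game $\Ga_B$ on $V\setminus\{u,v\}$ whose red edges are exactly the red edges of $\Ga$ avoiding $v$ (equivalently avoiding $u$ and $v$), and whose blue edges are the blue edges of $\Ga$ not through $u$, with $v$ deleted. Passing from $\Ga$ to $\Ga_B$ only removes red edges—which can only help Left—and only shrinks blue edges through $v$ or deletes blue edges through $u$. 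The plan is to combine two monotonicity steps with the fundamental comparison and close by induction (on the number of red edges, or on $|V|$): deleting Right's winning sets never lowers Left's value, the shrinking of blue edges through $v$ only helps Left, and the swap certifies that any blue edge Left could still have completed using $u$ is recovered through his ownership of $v$.

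The hard part is precisely this inequality $\mathrm{val}(B)\ge\mathrm{val}(\Ga)$, and the subtlety is real: Right's forced stone on $u$ genuinely kills Left's live blue edges through $u$, so one cannot merely invoke "an extra Left stone never hurts,'' which would ignore the compensating damage of Right's stone. The crux to nail down is that these sacrificed edges are worthless against best defence—each contains both $u$ and $v$, so the swap coupling shows that whenever such an edge would be completed the same completion is realized through Left owning $v$, while Right, barred from $v$ and hence from every red edge through $u$ or $v$, derives nothing from holding $u$. Turning this compensation into a clean proof that the reduction $\Ga\rightsquigarrow\Ga_B$ does not decrease Left's value is where the work lies; the forcing of~(a) and the swap coupling of the fundamental comparison are the two ingredients I would combine, with an induction to handle the games remaining after the forced exchange.
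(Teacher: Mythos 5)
Note first that the paper you are working from does not prove this lemma at all (it is recalled verbatim from \cite{JonasFlorian}), so your attempt has to stand on its own. Your step (a) is correct, and your swap coupling via $s=(u\,v)$ correctly shows $\mathrm{val}(C_v)\ge \mathrm{val}(C_u)$, i.e.\ that the first move $v$ is at least as good as the first move $u$. You also correctly reduce full optimality to the single inequality $\mathrm{val}(B)\ge\mathrm{val}(\Ga)$, where $B$ is the position after the exchange ($v$ for Left, $u$ for Right, Left to move). But that inequality is the entire content of the lemma, and you do not prove it: your last paragraph is a plan, not an argument, and the plan as stated cannot go through. Passing from $\Ga$ to $\Ga_B$ does not only delete red edges and shrink blue edges through $v$; it also \emph{deletes} the blue edge $\{u,v\}$ itself, together with every blue edge through $u$. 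No combination of ``deleting red edges helps Left,'' ``shrinking blue edges helps Left,'' and induction on $|V|$ or $|E_R|$ applies to a transformation that removes one of Left's winning sets --- this is exactly the compensation problem you flag yourself, and flagging it is not the same as solving it. So what you have is the forcing plus one special case ($v$ versus $u$); the comparison of $v$ against an arbitrary first move $w\notin\{u,v\}$ is missing.

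The missing idea is a coupling in which the pre-played pair $(v,u)$ is matched \emph{on demand} to the moment an optimal $\Ga$-strategy $\sigma$ first touches $\{u,v\}$. Concretely: play from $B$ while simulating $\sigma$ in a virtual copy of $\Ga$. As long as $\sigma$ has picked neither $u$ nor $v$, copy moves verbatim in both directions; the picked sets then differ exactly by the two pre-placed stones, so any blue edge $\sigma$ fills avoids $u$ and $v$ and is filled by Left in reality at the same time, while any red edge Right fills in reality avoids $v$, hence avoids $u$ by hypothesis, and is filled in the simulation at the same time. The first time $\sigma$ picks some $z_1\in\{u,v\}$, spend no real move: identify $\sigma$'s move $z_1$ with Left's pre-placed stone $v$, and feed the simulated adversary the immediate blocking reply $z_2$ (the other element of $\{u,v\}$), identified with Right's pre-placed stone $u$. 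After this virtual exchange the simulated and real boards have the same set of picked vertices, so the two plays stay synchronized forever; if $z_1=v$ they are literally identical from then on, and if $z_1=u$ they correspond under $s$, at which point your own swap argument closes the outcome comparison (any blue edge through $u$ contains $v$, which the simulated adversary now holds, so $\sigma$ can never complete such an edge, and red edges filled in reality avoid $u,v$ and are fixed by $s$). This deferred matching is what turns your ``compensation'' intuition into a proof of $\mathrm{val}(B)\ge\mathrm{val}(\Ga)$; without it, the lemma remains unestablished.
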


\subsection{Statement of the main results}\strut
\indent In Section \ref{section3}, we will show the following result.

\begin{theorem}\label{theo:32-new}
    Deciding whether Left has a winning strategy as the first player on an achievement positional game with blue edges of size~$2$ or~$3$ and pairwise disjoint red edges of size~$2$ is {\sf PSPACE}-complete.
\end{theorem}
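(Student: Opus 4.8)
\textbf{Membership in {\sf PSPACE}.} I would first dispatch the easy half. Since every move picks a previously unpicked vertex, any play lasts at most $|V|$ rounds, and a position is fully described by the two sets of vertices already chosen, which fits in polynomial space. The legal moves at any position and the outcome of a terminal position are computable in polynomial time, so this is a two-player game of polynomially bounded length with poly-time move and payoff relations; deciding whether Left wins as the first player is therefore solvable by an alternating polynomial-time machine, and ${\sf APTIME}={\sf PSPACE}$. Equivalently, a depth-first minimax recursion over the game tree stores only the current line of play and runs in polynomial space.

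\textbf{Hardness: the reduction.} For the lower bound I would reduce from \QBF, which is {\sf PSPACE}-complete. Given $\Psi = Q_1 x_1 \cdots Q_n x_n\,\phi$ with $\phi$ in $3$-DNF, I would build in polynomial time an achievement positional game $\Ga_\Psi$ with blue edges of size $2$ or $3$ and pairwise disjoint red edges of size $2$, such that Left wins as the first player iff $\Psi$ is true. The two principles recalled in the excerpt are the building blocks. A blue edge $\{u,v\}$ lets Left \emph{force} Right: after Left picks $u$ he threatens to fill $\{u,v\}$, so Right must answer with $v$ unless she has a faster win (cf. Lemma \ref{lem:greedy}); dually, a red edge $\{a,b\}$ lets Right force Left to answer with $b$. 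Crucially, since the red edges are pairwise disjoint they form a complete pairing of $(V,E_R)$, so by Lemma \ref{lem:pairing} Left already has a non-losing strategy: Right can never win, the value of $\Ga_\Psi$ is ``Left win or draw'', and the entire question is whether Left can convert his guaranteed draw into a win. This matches the intended reading of $\Ga_\Psi$ as a one-round-old Maker-Maker position.

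\textbf{The gadgets.} The design revolves around a forced ``spine'' that marches through the variables in quantifier order while relaying the initiative between the players. For each variable $x_i$ I would create a true-vertex and a false-vertex, the literals $x_i$ and $\neg x_i$, and encode each $3$-DNF clause as a blue $3$-edge on its three literal-vertices, so that Left can fill a clause edge exactly when he owns all three of its literal-vertices. An existential block gives Left a genuine free move where he picks the true- or false-vertex of $x_i$; short forcing chains then hand Left the literal-vertices consistent with his choice and block the opposite literal before passing the move back. A universal block is the delicate one: it must present Right with a position whose only non-losing replies are the two truth-values of $x_i$, and in which \emph{whichever} she picks, subsequent forced moves (using the disjoint red threats together with blue $2$-edges) install the corresponding literal-vertices in Left's hand, so that despite Right being the chooser the ``true'' literals end up owned by Left and usable in clause edges. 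After all $n$ blocks resolve, a verification phase lets Left complete a clause edge, on the correct tempo, precisely when the resulting assignment satisfies some disjunct of $\phi$; otherwise every clause edge is dead and Left's pairing defense yields only a draw.

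\textbf{Correctness and the main obstacle.} Completeness (if $\Psi$ is true then Left wins) follows the intended main line: Left plays his existential choices by a winning Skolem function, merely answers Right's forced options in the universal blocks, and reaches a satisfied clause edge. The hard direction is soundness, i.e. showing Left wins \emph{only if} $\Psi$ is true, which amounts to proving that no player can profitably leave the spine. I expect this to be the crux: I must check that every off-spine move by Right (stalling, playing inside a future block, or trying to poison a clause edge) is answered so that Left keeps at least his draw and, when $\Psi$ holds, his win; and symmetrically that Left gains nothing by deviating, since premature threats are neutralised and the strict disjointness of the red edges caps Right's forcing budget, making the tempo bookkeeping along the spine exact. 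Carrying this out needs a careful parity/tempo analysis of the interleaved forcing chains and a case check that the universal gadget cannot be exploited by Right to both dodge a commitment and sabotage a clause edge; the constraints that blue edges are confined to sizes $2$ and $3$ and that red edges must form a matching are precisely what make this gadget engineering tight. The statement for Maker-Maker games of rank $4$ then follows as indicated in the introduction, by reading $\Ga_\Psi$ as the position after one optimal round of play.
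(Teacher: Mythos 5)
Your high-level framework does match the paper's: membership via a polynomial-space minimax search; hardness by reduction from QBF (your DNF/Satisfier framing is just the negation-dual of the paper's reduction from the \emph{complement} of \QBF, where Left plays Falsifier, so that choice is immaterial); and, crucially, the observation that pairwise disjoint red edges of size~$2$ give Left a pairing-based non-losing strategy by Lemma~\ref{lem:pairing}, so the whole question is win versus draw, with blue and red $2$-edges acting as forcing moves. However, what you have written is a plan, not a proof. The entire content of the paper's argument is exactly what you defer as ``the crux'': the explicit gadgets, the definition of \emph{regular play}, and the four Claims~\ref{cla:regular0}--\ref{cla:regular3} showing that regular play simulates the QBF game and that neither player can profit from deviating. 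That case analysis is most of Section~\ref{section3}, and nothing in your sketch substitutes for it.

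Moreover, the one concrete design choice you do commit to would fail. If a disjunct is encoded as a blue $3$-edge directly on its three literal-vertices, then as soon as Left owns two of them the updated game contains a blue edge of size~$1$ on the third literal-vertex, so Right is forced to pick that vertex \emph{immediately} --- resolving a variable out of quantifier order and handing Left free tempo; Left can manufacture such premature threats deliberately, which wrecks soundness, and the forced replies wreck the intended semantics of later blocks. The paper avoids exactly this: each clause gets a private blue \emph{butterfly} on fresh vertices $b_j,b_j^1,\ldots,b_j^6$ that Left never touches during Phase~1, and the connection to the variables is mediated by blue link-edges $\{v(\ell_j^r),a_j^r,d_j^r\}$ and red destruction-edges $\{b_j^r,d_j^r\}$, so that ownership of a key vertex only determines who gets a later greedy move ($d_j^r$ for Left, $b_j^r$ for Right) and no clause-related threat exists before the verification phase. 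You also have no analogue of the paper's trap-edges $\{t_i,\tau_i,u\}$: without some such mechanism there is nothing to stop Right, at the moment she is supposed to commit to $y_i^{\sf T}$ or $y_i^{\sf F}$, from instead spending her move killing a butterfly, since her own safety is never at stake; the traps are precisely what make every such dodge immediately losing for her (case~1 in the proof of Claim~\ref{cla:regular2}). Supplying these two mechanisms, and verifying them against all deviations, is essentially the whole proof.
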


As an immediate consequence, we get {\sf PSPACE}-completeness for Maker-Maker games of rank~$4$.

\begin{corollary}\label{coro:makermaker4}
    Deciding whether the first player has a winning strategy for the Maker-Maker game on a hypergraph of rank~$4$ is {\sf PSPACE}-complete.
\end{corollary}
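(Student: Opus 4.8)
The plan is to give a polynomial-time reduction from the problem shown {\sf PSPACE}-complete in Theorem~\ref{theo:32-new}. Membership in {\sf PSPACE} is routine: a Maker-Maker game on $n$ vertices has game tree of depth at most $n$, so an alternating minimax recursion decides the outcome in polynomial space. For hardness, I would start from an achievement positional game $\Ga=(V,E_L,E_R)$ with blue edges of size $2$ or $3$ and pairwise disjoint red edges of size $2$, and build a rank-$4$ hypergraph $H$ by introducing two fresh vertices $a,b$ and taking as edges all $e\cup\{a\}$ with $e\in E_L$, all $e'\cup\{b\}$ with $e'\in E_R$, and the single edge $\{a,b\}$. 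Blue edges of size $\le 3$ become edges of size $\le 4$, red edges of size $2$ become edges of size $3$, and $\{a,b\}$ has size $2$, so $H$ has rank $4$. The idea is that $a$ plays the role of ``Left's'' vertex and $b$ of ``Right's'': an edge through $a$ (resp. $b$) can only ever be filled by the player owning $a$ (resp. $b$). I would then prove that the first player wins the Maker-Maker game on $H$ if and only if Left wins $\Ga$ as first player.

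For the ``if'' direction, assume Left wins $\Ga$ as first player. The first player opens with $a$, turning $\{a,b\}$ into a one-move threat; the opponent owns nothing yet, so it cannot fill any edge on its move and is forced to reply with $b$ on pain of losing at the next move. After these two moves the updated game is exactly $\Ga$, with the first player in the role of Left (to move) and the opponent in the role of Right, and the first player wins by following Left's strategy. The ``only if'' direction is the crux, and I would prove its contrapositive: if Left does not win $\Ga$ as first player, the second player draws on $H$, by cases on the opening move $x$. If $x=a$, the second player takes $b$, reducing to $\Ga$ in which it plays as Right; since the game is finite, Left failing to win as first player means Right has a non-losing strategy as second player, and the first player can fill no edge through $b$ nor through $\{a,b\}$. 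If $x=b$, the second player takes $a$ and plays the role of Left as the \emph{second} player of $\Ga$; here the pairwise disjoint red edges form a complete pairing of $(V,E_R)$, so Lemma~\ref{lem:pairing} gives Left a non-losing strategy even when moving second. If $x\in V$, the second player takes $a$, killing every edge through $a$ at once, and then blocks the surviving edges $\{r_1,r_2,b\}$ by maintaining the pairing $\{r_1,r_2\}$ on the red edges.

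The main obstacle is this last case, in which the second player has spent its reply to $x$ on claiming $a$ instead of on pairing, and so is momentarily one tempo behind on the red edges. The key point to make rigorous is that this debt is harmless: the first player made only one move before $a$ was taken, so at most one red edge is compromised, and since every surviving edge $\{r_1,r_2,b\}$ has three vertices, the first player needs two further moves to fill it; the second player can therefore repair the debt in time, either by completing the pairing inside that red edge or by taking $b$, which simultaneously kills all red edges since they share $b$. Disjointness of the red edges is exactly what prevents the first player from setting up two independent threats at once, so that a single reply always suffices. As the construction is plainly computable in polynomial time, this together with {\sf PSPACE}-membership yields {\sf PSPACE}-completeness.
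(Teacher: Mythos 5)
Your proposal is correct and takes essentially the same approach as the paper: the same two-extra-vertex gadget (the paper's $u,v$ are your $a,b$), the same forced exchange on $\{a,b\}$ for the forward direction, and the same use of Lemma~\ref{lem:pairing} on the pairwise disjoint red edges — including the one-tempo ``repair'' move on the single compromised red pair — to punish any opening other than $a$. The only cosmetic difference is that you argue the contrapositive by exhibiting a drawing strategy for the second player (appealing to finite-game determinacy in the case $x=a$), whereas the paper derives a contradiction directly from the structure of an assumed winning strategy; the case analysis is otherwise identical.
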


\begin{proof}[Proof of Corollary \ref{coro:makermaker4} assuming Theorem \ref{theo:32-new}]
    Membership in {\sf PSPACE} is well known for Maker-Maker games. Let $\Ga=(V,E_L,E_R)$ be an achievement positional game with blue edges of size~$2$ or~$3$ and pairwise disjoint red edges of size~$2$. We define the hypergraph $H=(V \cup \{u,v\}, E)$, where $u$ and $v$ are new vertices and:
    $$ E = \{e \cup \{u\} \mid e \in E_L \} \cup \{e \cup \{v\} \mid e \in E_R \} \cup \{\{u,v\}\}. $$
    Note that $H$ has rank~$4$, is efficiently computed from $\Ga$, and behaves the exact same as the achievement positional game $\Ga'=(V \cup \{u,v\},E,E)$: the first player has a winning strategy for the Maker-Maker game on $H$ if and only if Left has a winning strategy on $\Ga'$ as the first player. Therefore, to deduce the {\sf PSPACE}-hardness of Maker-Maker games of rank~$4$ from Theorem \ref{theo:32-new}, it suffices to show that Left has a winning strategy on $\Ga'$ as the first player if and only if Left has a winning strategy on $\Ga$ as the first player.
    
    \begin{itemize}
        \item First suppose that Left has a winning strategy on $\Ga$ as the first player. Then, playing first on $\Ga'$, Left can simply start by picking $u$, which forces Right to pick $v$ because $\{u,v\} \in E$. The updated game is precisely $\Ga$, so Left has a winning strategy as the first player.
        \item Conversely, suppose that Left has a winning strategy $\strat$ on $\Ga'$ as the first player. We claim that $\strat$ necessarily has Left picking $u$ as his first move. Indeed:
        \begin{itemize}[nolistsep,noitemsep]
            \item[--] Suppose that $\strat$ instructs Left to start by picking $v$. Then Right is forced to pick $u$. After that move, $E_R$ forms a complete pairing of the updated blue edges.
            \item[--] Now, suppose that $\strat$ instructs Left to start by picking some $w \not\in \{u,v\}$. Then Right has no forced move since $\{u,v\}$ is the only element of $E$ of size less than~$3$. Therefore, Right can pick $u$, which forces Left to pick $v$. If Right has no forced move, then $E_R$ forms a complete pairing of the updated blue edges. If Right has a forced move \textit{i.e.} there exists some (necessarily unique) $w'$ such that $\{v,w,w'\} \in E$, then Right picks $w'$, and $E_R \setminus \{\{w,w'\}\}$ forms a complete pairing of the updated blue edges after that move.
        \end{itemize}
        We see that, if $\strat$ does not instruct Left to pick $u$ as his first move, then Right can always obtain a non-losing pairing strategy by Lemma \ref{lem:pairing}, contradicting the fact that $\strat$ is winning for Left. In conclusion, $\strat$ instructs Left to start by picking $u$, which forces Right to pick $v$. Since the updated game after these two moves is precisely $\Ga$, we conclude that Left has a winning strategy on $\Ga$ as the first player. \qedhere
    \end{itemize}
\end{proof}

\section{Proof of Theorem \ref{theo:32-new}}\label{section3}\strut
\indent Membership in {\sf PSPACE} is straightforward, regardless of assumptions on the size of the edges \cite{JonasFlorian}. As for {\sf PSPACE}-hardness, we perform a reduction from the classic \QBF~decision problem which has been shown {\sf PSPACE}-complete by Stockmeyer and Meyer \cite{QBF}, or rather from its complement which is also {\sf PSPACE}-complete since {\sf PSPACE}={\sf coPSPACE} \cite{copspace}.

The problem \QBF~can be formulated in terms of the following game. A logic formula $\phi$ in CNF form with clauses of size exactly~$3$ is given, along with a prescribed numbering of its variables $x_1,y_1,\ldots,x_n,y_n$. Two players, Falsifier and Satisfier, take turns assigning truth values to the variables. We may assume that Falsifier goes first, setting $x_1$ to \true~or \false. Satisfier then sets $y_1$ to \true~or \false. Falsifier then sets $x_2$ to \true~or \false, and so on until a full valuation $\mu$ is built.
Satisfier wins the game if $\mu$ satisfies $\phi$, otherwise Falsifier wins. \\

\begin{tabularx}{0.95\textwidth}{|l @{} l @{} X|}
	\hline
	\multicolumn{3}{|l|}{\,\,\QBF} \\ \hline
	Input $\,$ & : \,\, & A logic formula $\phi$ in CNF form, with clauses of size exactly~$3$, and an even number of variables with a prescribed ordering $x_1,y_1,\ldots,x_n,y_n$. \\
	Output $\,$ & : \,\, & {\sf T} if Satisfier has a winning strategy as the second player, {\sf F} otherwise. \\ \hline
\end{tabularx} \\

Let $\phi$ be a logic formula in CNF form, with $m$ clauses $c_1,\ldots,c_m$ of size exactly~$3$, and $2n$ variables with a prescribed ordering $x_1,y_1,\ldots,x_n,y_n$. For all $j\in\segment{1}{m}$, write $c_j=\ell_j^1 \vee \ell_j^2 \vee \ell_j^3$, where the literals are ordered the same as their corresponding variables.
We are now going to define an achievement positional game $\Ga$, with blue edges of size~$2$ or~$3$ and pairwise disjoint red edges of size~$2$, on which Left has a winning strategy as the first player if and only if Falsifier has a winning strategy for \QBF~on $\phi$ as the first player.

\subsection{Idea of the construction}\strut
\indent Before going into details, let us briefly explain the general idea. We see Left (who starts) as Falsifier, and Right as Satisfier. Note that the red edges being pairwise disjoint implies that optimal play can only lead to a win for Left (\textit{i.e.} Falsifier wins) or a draw (\textit{i.e.} Satisfier wins), since Left could always apply a pairing strategy as per Lemma \ref{lem:pairing}. Each variable $x_i$ (resp. $y_i$) will be represented by two \textit{key vertices} $x_i^{\sf T}$ and $x_i^{\sf F}$ (resp. $y_i^{\sf T}$ and $y_i^{\sf F}$), and each clause will be represented by a blue butterfly.
With optimal play, the game $\Ga$ will have two phases. During Phase 1, the players' moves will be largely forced, but they will each have $n$ decisions to make, corresponding to the truth values they alternately choose as they build a full valuation $\mu$ for the variables. For $i$ from $1$ to $n$, Left will choose between picking $x_i^{\sf T}$ or $x_i^{\sf F}$ (with Right picking the other), hereby defining $\mu(x_i)$, then Right will choose between picking $y_i^{\sf T}$ or $y_i^{\sf F}$ (with Left picking the other), hereby defining $\mu(y_i)$.
By the end of Phase 1, Right will have destroyed every butterfly corresponding to a clause that is satisfied by $\mu$. During Phase 2, Right will hold on to a draw if $\mu$ satisfies $\phi$, whereas Left will use an intact blue butterfly to win if $\mu$ does not satisfy $\phi$.
There will be five types of edges:
\begin{itemize}[noitemsep,nolistsep]
    \item[--] \textit{Guide-edges} (blue and red) forcing both players' moves between two decisions they have to make during Phase 1;
    \item[--] \textit{Butterfly-edges} (blue only) forming butterflies which correspond to the clauses of $\phi$ and allow Left to win During Phase 2 if $\mu$ does not satisfy $\phi$;
    \item[--] \textit{Destruction-edges} (red only) allowing Right to destroy all the blue butterflies if $\mu$ satisfies $\phi$;
    \item[--] \textit{Trap-edges} (blue only) preventing Right from ``cheating'' during Phase 1 when she is supposed to choose between two key vertices to pick;
    \item[--] \textit{Link-edges} (blue only) connecting the clause gadgets to their associated variable gadgets.
\end{itemize}
The main challenge in the proof will be to verify that \textit{regular play}, which refers to the very restricted set of strategies that we define for both players during Phase 1, is actually optimal.

\subsection{Definition of the game}\strut
\indent We associate each literal $\ell$ with an index $\ind(\ell)$ defined as the unique $i \in \segment{1}{n}$ such that $\ell \in \{x_i,\neg x_i, y_i, \neg y_i\}$, as well as a key vertex $v(\ell)$ defined by: $v(x_i)=x_i^{\sf F}$, $v(\neg x_i)=x_i^{\sf T}$, $v(y_i)=y_i^{\sf T}$, $v(\neg y_i)=y_i^{\sf F}$. We define $\Ga=(V,E_L,E_R)$, where $V = \bigcup_{1 \leq i \leq n} V_i \cup \bigcup_{1 \leq j \leq m} C_j$, $E_L = \bigcup_{1 \leq i \leq n}(G_i^L \cup T_i) \cup \bigcup_{1 \leq j \leq m} (B_j \cup L_j)$ and $E_R = \bigcup_{1 \leq i \leq n}G_i^R \cup  \bigcup_{1 \leq j \leq m} D_j$, as follows.

\begin{itemize}

    \item For each $i \in \segment{1}{n}$, we have a variable gadget associated with the variables $x_i$ and $y_i$, with vertex set $V_i=\{x_i^{\sf T},x_i^{\sf F},y_i^{\sf T},y_i^{\sf F},w_i,w'_i,s_i,t_i,\omega_i,\omega'_i,\zeta_i,\zeta'_i,\tau_i,\delta_i,\lambda_i\}$. The variable gadgets only contain the guide-edges. We introduce the following useful notation for the list of blue guide-edges. Given two vertices $u,u'$ and an integer $i \in \segment{1}{n}$, the notation $\{u,u'\}^{(i)}$ means that we have the two blue edges $\{u,u',x_i^{\sf T}\}$ and $\{u,u',x_i^{\sf F}\}$. The idea is that Left will pick either $x_i^{\sf T}$ or $x_i^{\sf F}$, which creates the blue edge $\{u,u'\}$ in the updated game in both cases. We extend that notation to $i=0$ by setting $\{u,u'\}^{(0)}=\{u,u'\}$. The guide-edges are:
    \begin{align*}
    G_i^L & = \,\, \left\{\right. \{x_i^{\sf T},x_i^{\sf F}\}^{(i-1)},\{w_i,s_i\}^{(i)},\,\{w'_i,s_i\}^{(i)}, \\
    & \quad\quad\,\,\, \{y_i^{\sf T},y_i^{\sf F}\}^{(i)},\{y_i^{\sf T},\zeta_i\}^{(i)},\{y_i^{\sf F},\zeta'_i\}^{(i)},\{\zeta_i,\tau_i\}^{(i)},\{\zeta'_i,\tau_i\}^{(i)},\{\omega_i,\omega'_i\}^{(i)},\{\delta_i,\lambda_i\}^{(i)}\left.\right\}; \\
    G_i^R & = \,\, \left\{\right. \{x_i^{\sf T},w_i\},\{x_i^{\sf F},w'_i\},\{s_i,t_i\}, \\
        & \quad\quad\,\,\, \{y_i^{\sf T},\omega_i\},\{y_i^{\sf F},\omega'_i\},\{\zeta_i,\zeta'_i\},\{\tau_i,\delta_i\} \left.\right\}.
    \end{align*}
    Note that, even though we did not include $x_{i-1}^{\sf T}$ and $x_{i-1}^{\sf F}$ in our definition of $V_i$ to avoid repeating vertices, the variable gadget for $i \geq 2$ is actually connected to the variable gadget for $i-1$ through the blue edges $\{x_i^{\sf T},x_i^{\sf F},x_{i-1}^{\sf T}\}$ and $\{x_i^{\sf T},x_i^{\sf F},x_{i-1}^{\sf F}\}$. Figure \ref{fig:23-variable1} with $i=1$ illustrates the variable gadget associated with $x_1$ and $y_1$.

    \item For each $j\in\segment{1}{m}$, we have a clause gadget associated with the clause $c_j$, with vertex set $C_j = \{v(\ell_j^1),v(\ell_j^2),v(\ell_j^3),a_j^1,a_j^2,a_j^3,d_j,d_j^1,d_j^2,d_j^3,b_j,b_j^1,b_j^2,b_j^3,b_j^4,b_j^5,b_j^6\}$. Note that the key vertices are the only ones to appear in both variable gadgets and clause gadgets. The clause gadgets contain the butterfly-edges, destruction-edges and link-edges. The butterfly-edges are:
    \begin{align*}
        B_j & = \,\, \left\{\right. \{b_j,b_j^1,b_j^2\},\{b_j,b_j^2,b_j^3\},\{b_j,b_j^4,b_j^5\},\{b_j,b_j^5,b_j^6\} \left.\right\}.
    \end{align*}
    The destruction-edges are:
    \begin{align*}
        D_j & = \,\, \left\{\right. \{b_j,d_j\},\{b_j^1,d_j^1\},\{b_j^2,d_j^2\},\{b_j^3,d_j^3\} \left.\right\}.
    \end{align*}
    The link-edges are:
    \begin{align*}
        L_j & = \,\, \left\{\right. \{v(\ell_j^1),a_j^1,d_j^1\},\{v(\ell_j^2),a_j^2,d_j^2\},\{v(\ell_j^3),a_j^3,d_j^3\} \left.\right\}.
    \end{align*}

    As an illustration, the clause gadget associated with the clause $x_2 \vee \neg y_4 \vee y_7$ is pictured in Figure \ref{fig:23-clause1}.

    \begin{figure}[h]
        \centering
        \includegraphics[scale=.43]{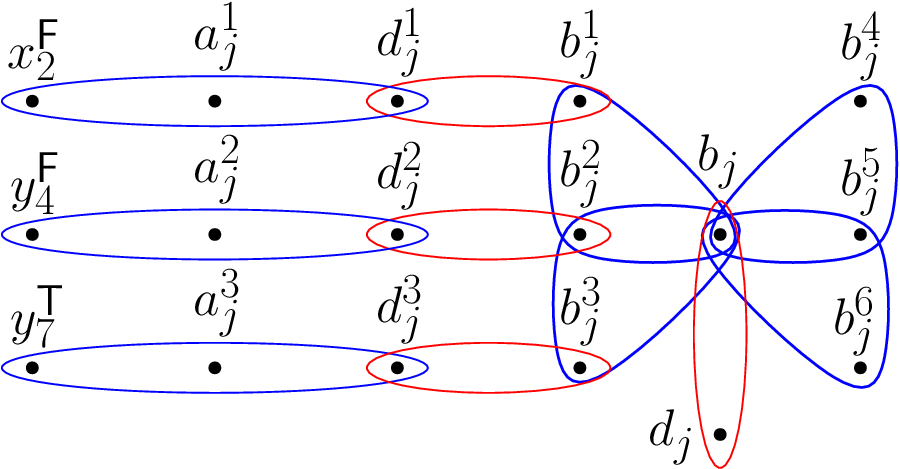}
        \caption{The clause gadget associated with the clause $c_j = x_2 \vee \neg y_4 \vee y_7$.}\label{fig:23-clause1}
    \end{figure}

    \item Finally, there are trap-edges that go across the different gadgets.
    For each $i \in \segment{1}{n}$, we define $T_i$ as the set of all $\{t_i,\tau_i,u\}$ such that $u$ belongs to some red edge $e$ which is either: a guide-edge in $G_k^R$ where $k\in\segment{i+1}{n}$, a destruction-edge of the form $\{b_j,d_j\}$ where $j\in\segment{1}{m}$, or a destruction-edge of the form $\{b_j^r,d_j^r\}$ where $j\in\segment{1}{m}$, $r\in\segment{1}{3}$ and $\ind(\ell_j^r) \geq i$. The idea is that Left will pick $t_i$ and Right will be forced to defuse the corresponding trap-edges by quickly picking $\tau_i$.
        
\end{itemize}

An important remark is that all blue edges have size~$3$ except for $\{x_1^{\sf T},x_1^{\sf F}\}$, the single blue edge of size~$2$.

\subsection{Regular play}\strut
\indent Assuming Left starts, we define \textit{regular play} on $\Ga$ during Phase 1, where players build a valuation $\mu$ for the variables $x_1,y_1\ldots,x_n,y_n$. An arrow labeled ``f'' signifies a forced move (when a player cannot win in one move but the opponent threatens to do so), while an arrow labeled ``g'' signifies a greedy move in the sense of Lemma \ref{lem:greedy}. For $i$ from $1$ to $n$, \textit{round~$i$} of regular play goes as follows, with the starting situation in the associated variable gadget shown in Figure \ref{fig:23-variable1}.

    \begin{figure}[h]
        \centering
        \includegraphics[scale=.43]{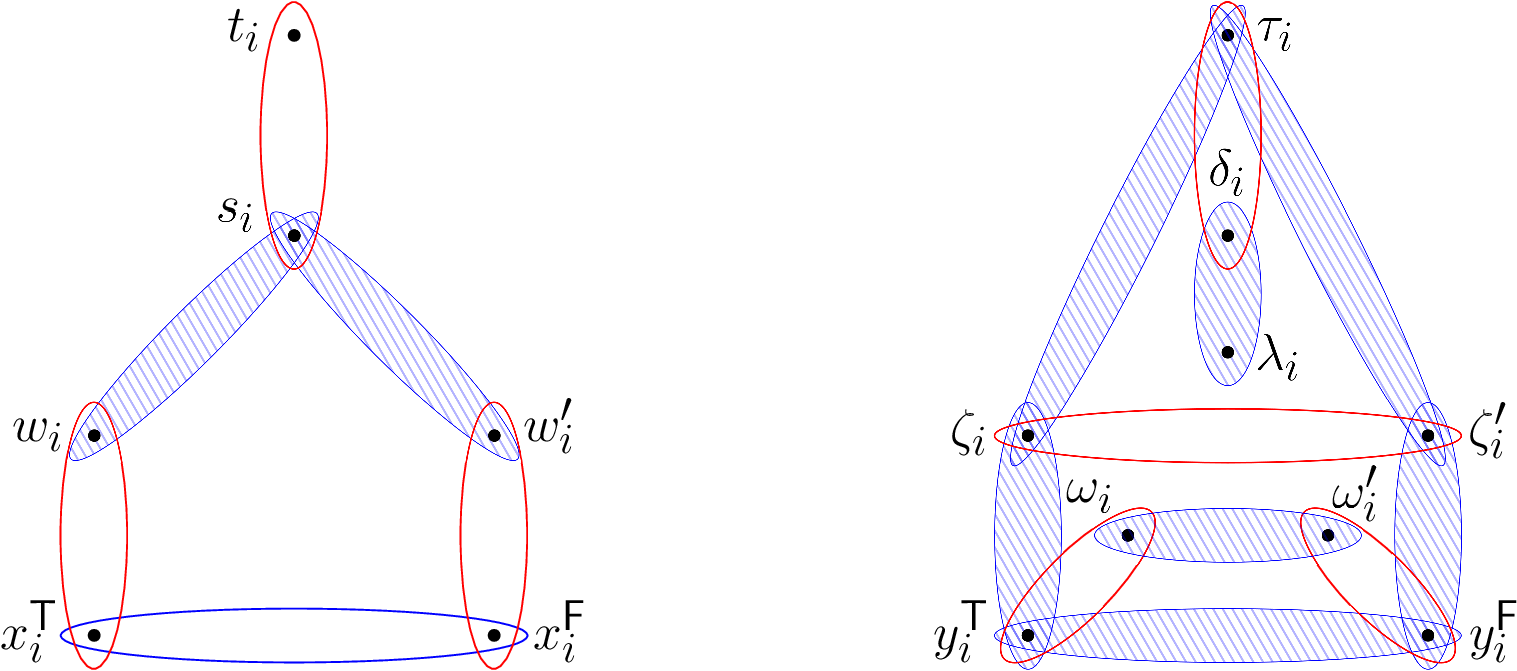}
        \caption{The updated variable gadget associated with $x_i$ and $y_i$ at the start of round~$i$ of regular play. For clarity, we use a shaded blue edge $\{u,u'\}$ to represent $\{u,u'\}^{(i)}$ (so that the left part and the right part are actually connected to each other).}\label{fig:23-variable1}
    \end{figure}

\begin{enumerate}[label={(\arabic*)}]

    \item Left chooses the value {\sf T} or {\sf F} for $x_i$, by picking $x_i^{\sf T}$ or $x_i^{\sf F}$ respectively. Both options trigger a sequence of forced moves inside the associated variable gadget:
        \begin{center}
        \begin{tikzcd}[row sep = tiny , sep = small]
            & \textcolor{blue}{x_i^{\sf T}} \arrow{r}{\text{f}} & \textcolor{red}{x_i^{\sf F}} \arrow{r}{\text{f}} & \textcolor{blue}{w'_i} \arrow{dr}{\text{f}} & & \\
            \arrow{ur}{\mu(x_i)={\sf T}} \arrow{dr}[swap]{\mu(x_i)={\sf F}} & & & & \textcolor{red}{s_i} \arrow{r}{\text{f}} & \textcolor{blue}{t_i} \\
            & \textcolor{blue}{x_i^{\sf F}} \arrow{r}{\text{f}} & \textcolor{red}{x_i^{\sf T}} \arrow{r}{\text{f}} & \textcolor{blue}{w_i} \arrow{ur}{\text{f}} & &
        \end{tikzcd}
        \end{center}

    In the updated game obtained after step~(1), we get the updated guide-edges pictured in Figure \ref{fig:23-variable2} (in the case where Left has put $x_i$ to {\sf T}, the other case being analogous).

    \begin{figure}[h]
        \centering
        \includegraphics[scale=.43]{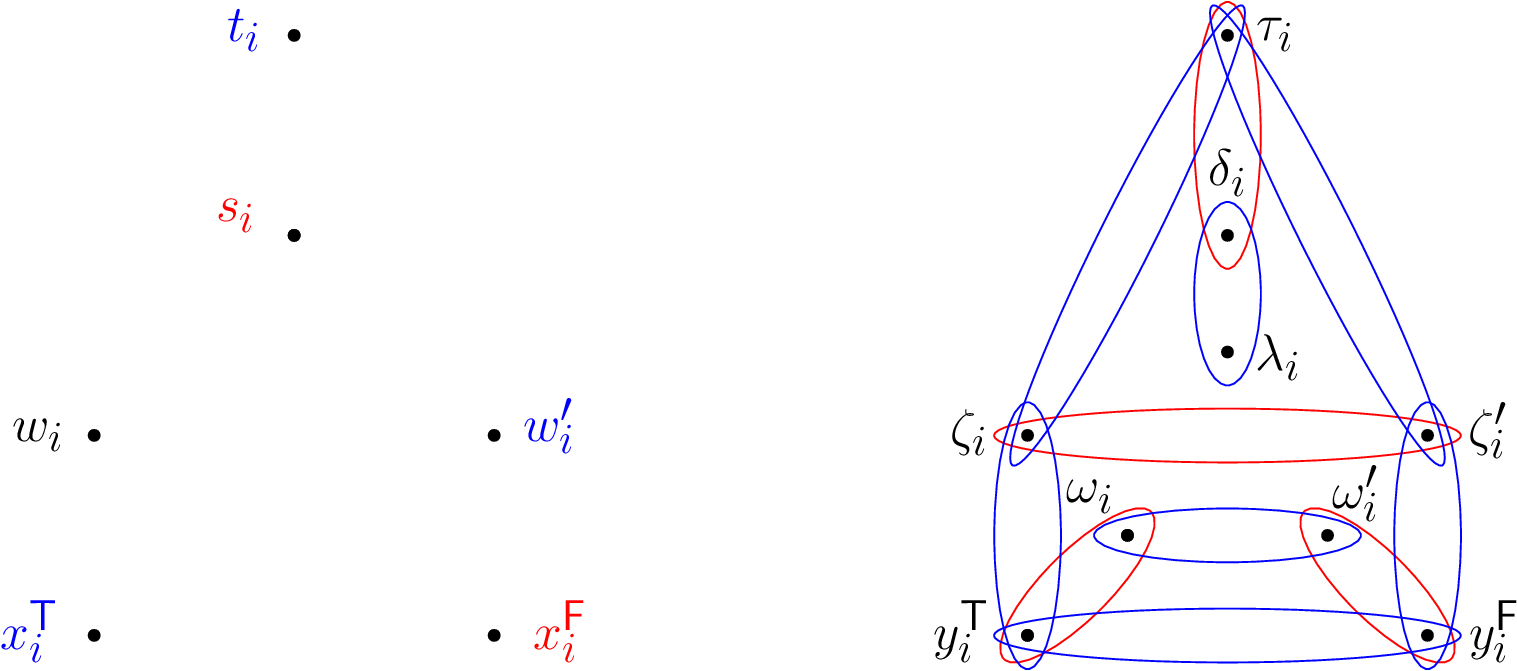}
        \caption{The updated variable gadget associated with $x_i$ and $y_i$ after step~(1) of round~$i$ of regular play (in the case where Left has put $x_i$ to {\sf T}).}\label{fig:23-variable2}
    \end{figure}

    \item Right now has no forced move, and uses this window to play greedy moves inside the clause gadgets, with the aim of destroying some blue butterflies (except if $i=1$, in which case she skips this step). For each $j\in\segment{1}{m}$ and each $r \in \segment{1}{3}$ such that Right has picked $v(\ell_j^r)$ during round~$i-1$ of regular play, Right plays the following greedy move:
        \begin{center}
        \begin{tikzcd}[row sep = tiny , sep = small]
            \arrow{r}{\text{g}} & \textcolor{red}{b_j^r} \arrow{r}{\text{f}} & \textcolor{blue}{d_j^r}
        \end{tikzcd}
        \end{center}
    For instance, consider $i=3$ and $c_j = x_2 \vee \neg y_4 \vee y_7$: if Right has picked $x_2^{\sf F}=v(\ell_j^1)$ during round~$2$ of regular play, then Right will play the greedy move $b_j^1$ as illustrated in Figure \ref{fig:23-clause2}. 

    \begin{figure}[h]
        \centering
        \includegraphics[scale=.43]{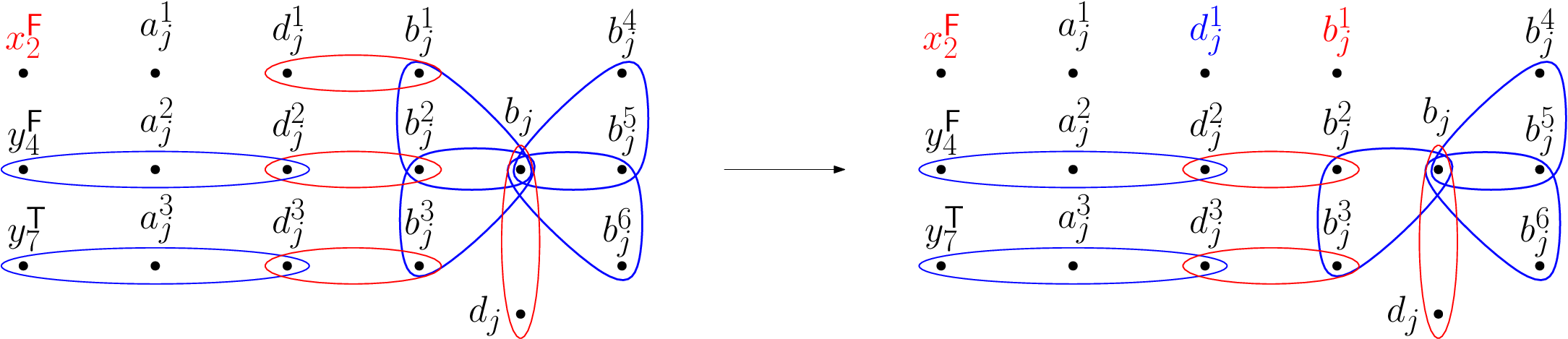}
        \caption{The clause gadget associated with the clause $c_j = x_2 \vee \neg y_4 \vee y_7$, before (left) and (after) Right's greedy move during round~$3$ of regular play if she had picked $x_2^{\sf F}$ in the previous round.}\label{fig:23-clause2}
    \end{figure}

    \item Right chooses the value {\sf T} or {\sf F} for $y_i$, by picking $y_i^{\sf T}$ or $y_i^{\sf F}$ respectively. Both options trigger a sequence of forced moves inside the associated variable gadget:
        \begin{center}
        \begin{tikzcd}[row sep = tiny , sep = small]
            & \textcolor{red}{y_i^{\sf T}} \arrow{r}{\text{f}} & \textcolor{blue}{\omega_i} \arrow{r}{\text{f}} & \textcolor{red}{\omega'_i} \arrow{r}{\text{f}} & \textcolor{blue}{y_i^{\sf F}} \arrow{r}{\text{f}} & \textcolor{red}{\zeta'_i} \arrow{r}{\text{f}} & \textcolor{blue}{\zeta_i} \arrow{dr}{\text{f}} & & & \\
            \arrow{ur}{\mu(y_i)={\sf T}} \arrow{dr}[swap]{\mu(y_i)={\sf F}} & & & & & & & \textcolor{red}{\tau_i} \arrow{r}{\text{f}} & \textcolor{blue}{\delta_i} \arrow{r}{\text{f}} & \textcolor{red}{\lambda_i} \\
            & \textcolor{red}{y_i^{\sf F}} \arrow{r}{\text{f}} & \textcolor{blue}{\omega'_i} \arrow{r}{\text{f}} & \textcolor{red}{\omega_i} \arrow{r}{\text{f}} & \textcolor{blue}{y_i^{\sf T}} \arrow{r}{\text{f}} & \textcolor{red}{\zeta_i} \arrow{r}{\text{f}} & \textcolor{blue}{\zeta'_i} \arrow{ur}{\text{f}} & & &
        \end{tikzcd}
        \end{center}
        
    \item Left now has no forced move, and uses this window to play greedy moves inside the clause gadgets, with the aim of removing some destruction-edges to protect her blue butterflies. For each $j \in \segment{1}{m}$ and each $r \in\segment{1}{3}$ such that Left has picked $v(\ell_j^r)$ during round~$i$ (the current round) of regular play, Left plays the following greedy move:
        \begin{center}
        \begin{tikzcd}[row sep = tiny , sep = small]
            \arrow{r}{\text{g}} & \textcolor{blue}{d_j^r} \arrow{r}{\text{f}} & \textcolor{red}{a_j^r}
        \end{tikzcd}
        \end{center}
    For instance, consider $i=2$ and $c_j = x_2 \vee \neg y_4 \vee y_7$: if Left has picked $x_2^{\sf F}=v(\ell_j^1)$ during round~$2$ of regular play, then Right will play the greedy move $d_j^1$ as pictured in Figure \ref{fig:23-clause3}. Round~$i$ of regular play is now over. Note that the updated variable gadget associated with $x_{i+1}$ and $y_{i+1}$, assuming $i<n$, is indeed as pictured in Figure \ref{fig:23-variable1} (replacing each $i$ by $i+1$) since Left picking either $x_i^{\sf T}$ or $x_i^{\sf F}$ during step~(1) has created the updated blue edge $\{x_{i+1}^{\sf T},x_{i+1}^{\sf F}\}$.

    \begin{figure}[h]
        \centering
        \includegraphics[scale=.43]{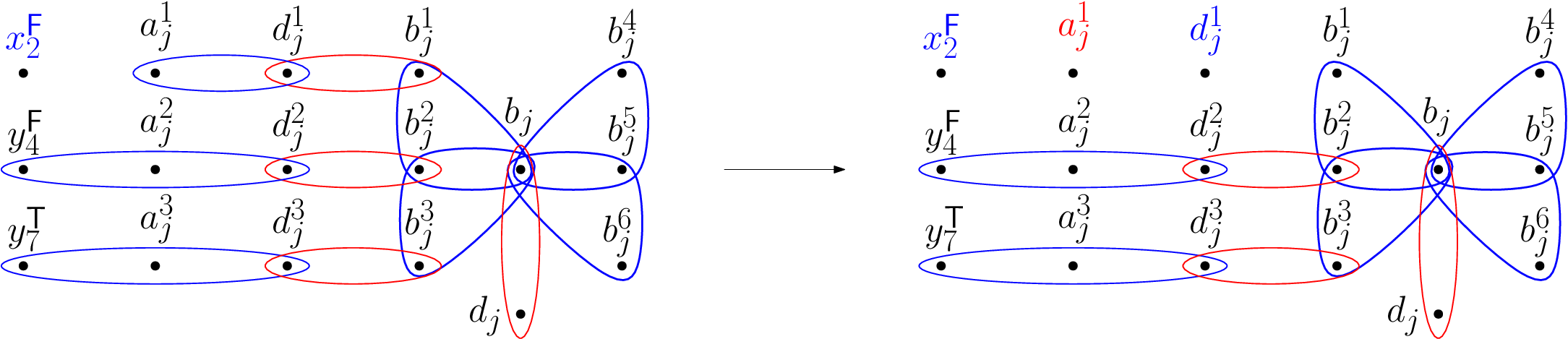}
        \caption{The clause gadget associated with the clause $c_j = x_2 \vee \neg y_4 \vee y_7$, before (left) and (after) Left's greedy move during round~$2$ of regular play if he had picked $x_2^{\sf F}$ at the start of the round.}\label{fig:23-clause3}
    \end{figure}
    
\end{enumerate}

Let us state some simple properties of regular play. We say a blue (resp. red) edge is \textit{dead} if Right (resp. Left) has \textit{killed} it by picking one of its vertices, or \textit{intact} if none of its vertices has been picked.

\begin{claim}\label{cla:regular0}
    Let $i\in\segment{1}{n}$, and consider the situation at the end of round~$i$ of regular play, assuming that both players have followed regular play from the beginning. The following properties hold:
    \begin{enumerate}[label={\textup{(\roman*)}}]
        \item All guide-edges in $\bigcup_{1 \leq k \leq i} (G_k^L \cup G_k^R)$ are dead. The other guide-edges are intact, apart (if $i<n$) from $\{x_{i+1}^{\sf T},x_{i+1}^{\sf F},x_i^{\sf T}\}$ and $\{x_{i+1}^{\sf T},x_{i+1}^{\sf F},x_i^{\sf F}\}$: one is dead, while the other has yielded $\{x_{i+1}^{\sf T},x_{i+1}^{\sf F}\}$ in the updated game.
        \item All trap-edges in $\bigcup_{1 \leq k \leq i} T_k$ are dead. The other trap-edges are intact.
        \item All link-edges $\{v(\ell_j^r),a_j^r,d_j^r\}$ such that $\ind(\ell_j^r) \leq i$ are dead. The other link-edges are intact.
        \item All destruction-edges of the form $\{b_j^r,d_j^r\}$ such that, either $\ind(\ell_j^r) \leq i$ and Left has picked $v(\ell_j^r)$, or $\ind(\ell_j^r) \leq i-1$ and Right has picked $v(\ell_j^r)$, are dead. The other destruction-edges are intact.
        \item Left has not picked any vertex in butterfly-edges. Moreover, for all $j\in\segment{1}{m}$, the butterfly-edges $\{b_j,b_j^4,b_j^5\}$ and $\{b_j,b_j^5,b_j^6\}$ are intact, and the two following assertions are equivalent:
            \begin{enumerate}[label={\textup{(\alph*)}},noitemsep,nolistsep]
                \item The butterfly-edges $\{b_j,b_j^1,b_j^2\}$ and $\{b_j,b_j^2,b_j^3\}$ are intact, and the three destruction-edges $\{b_j^1,d_j^1\}$, $\{b_j^2,d_j^2\}$ and $\{b_j^3,d_j^3\}$ are dead.
                \item Left has picked the three key vertices $v(\ell_j^1)$, $v(\ell_j^2)$ and $v(\ell_j^3)$.
            \end{enumerate}
        
    \end{enumerate}
\end{claim}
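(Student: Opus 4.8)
The plan is to prove Claim \ref{cla:regular0} by induction on $i$, where the inductive step reduces to tracing the four steps of round~$i$ of regular play. Since to know the state at the start of round~$i$ we rely on all earlier rounds, induction is the honest backbone, but the bookkeeping is cleanest to organize by first recording, once and for all, a \emph{move log} valid throughout regular play: each player only ever picks (a)~key vertices, in step~(1) and step~(3); (b)~the vertices of the variable gadget dictated by the forced chains of step~(1) and step~(3); and (c)~the clause-gadget vertices $b_j^r$, $d_j^r$, $a_j^r$ produced by the greedy moves of steps~(2) and~(4). In particular Left never picks a butterfly vertex, and the only butterfly vertices ever picked are the $b_j^r$ with $r\in\segment{1}{3}$, picked by Right in step~(2); this already yields the first two sentences of~(v). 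I would also record the \emph{mutual-exclusivity} fact that, in round $k=\ind(\ell_j^r)$, exactly one of the two players picks $v(\ell_j^r)$, so that the vertex $d_j^r$ is generated by exactly one of step~(4) of round~$k$ (if Left picked $v(\ell_j^r)$) or step~(2) of round~$k+1$ (if Right picked $v(\ell_j^r)$); this is what makes the greedy moves well-defined, the relevant responses being forcing by Lemma~\ref{lem:greedy} once the associated link-edge has died or been reduced to size two.

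For parts~(i)--(iii) the verification is a direct reading of this log. For~(i), the two chains in steps~(1) and~(3) visit, in order, every vertex of the guide-edges of $G_i^L\cup G_i^R$, with the opponent always supplying the pick that kills the current guide-edge; the only nonstandard point is the pair $\{x_{i+1}^{\sf T},x_{i+1}^{\sf F},x_i^{\sf T}\}$, $\{x_{i+1}^{\sf T},x_{i+1}^{\sf F},x_i^{\sf F}\}$ of $G_{i+1}^L$, where Left's and Right's picks of the two $x_i$-vertices leave one edge dead and turn the other into the updated edge $\{x_{i+1}^{\sf T},x_{i+1}^{\sf F}\}$, exactly as in Figure~\ref{fig:23-variable1}. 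For~(ii), all of $T_i$ shares the vertex $\tau_i$, which Right picks inside the step~(3) chain, so the whole of $T_i$ dies at once; a trap-edge in $T_k$ with $k>i$ stays intact because each of its vertices ($t_k$, $\tau_k$, and the vertex $u$ lying in a still-untouched red edge, using $\ind(\ell_j^r)\ge k$ for the destruction-type vertices) is only reached in a later round. For~(iii), a link-edge $\{v(\ell_j^r),a_j^r,d_j^r\}$ is blue, hence dies exactly when Right picks one of its vertices; by the log this is either Right picking $v(\ell_j^r)$ in round $\ind(\ell_j^r)$, or Right picking $a_j^r$ in response to Left's greedy $d_j^r$ in step~(4) of round $\ind(\ell_j^r)$, and in both cases precisely by the end of round $\ind(\ell_j^r)$.

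Parts~(iv) and~(v) are where the one-round offset must be handled carefully, and this is the main obstacle. A destruction-edge $\{b_j^r,d_j^r\}$ is red, hence dies exactly when \emph{Left} picks one of its vertices; since Left never picks $b_j^r$, it dies exactly when Left picks $d_j^r$. By mutual exclusivity, Left picks $d_j^r$ in step~(4) of round $\ind(\ell_j^r)$ when Left picked $v(\ell_j^r)$, and in step~(2) of round $\ind(\ell_j^r)+1$ when Right picked $v(\ell_j^r)$; asking whether that round has occurred by the end of round~$i$ yields exactly the two stated conditions $\ind(\ell_j^r)\le i$ with Left picking $v(\ell_j^r)$, respectively $\ind(\ell_j^r)\le i-1$ with Right picking $v(\ell_j^r)$. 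The asymmetric thresholds $i$ versus $i-1$ are precisely the content of the offset, and getting them right is the delicate part of the whole claim.

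Finally, the equivalence (a)$\Leftrightarrow$(b) of~(v) follows from~(iv) together with the log's characterization of butterfly vertices. The edges $\{b_j,b_j^1,b_j^2\}$ and $\{b_j,b_j^2,b_j^3\}$ are intact iff none of $b_j^1,b_j^2,b_j^3$ is picked, i.e.\ iff Right made no step~(2) greedy move $b_j^r$, i.e.\ for each $r$ it is \emph{not} the case that Right picked $v(\ell_j^r)$ with $\ind(\ell_j^r)\le i-1$. Combining with~(iv): if moreover all three $\{b_j^r,d_j^r\}$ are dead, then for each $r$ only the ``Left picked $v(\ell_j^r)$'' branch of~(iv) can hold, giving~(b); conversely, if Left picked all three key vertices then each $b_j^r$ is untouched, so both left-wing butterfly-edges are intact, and each $\{b_j^r,d_j^r\}$ is dead by~(iv), giving~(a). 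The right-wing edges $\{b_j,b_j^4,b_j^5\}$ and $\{b_j,b_j^5,b_j^6\}$ are always intact since no player ever picks $b_j,b_j^4,b_j^5,b_j^6$, which completes the proof of~(v) and of the claim.
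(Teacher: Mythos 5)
Your proof is correct and follows essentially the same route as the paper's: an item-by-item verification of what regular play picks, with the same case analysis for trap-edges (the three possible red edges containing the third vertex), the same who-picked-$v(\ell_j^r)$ split for link- and destruction-edges, and the same $i$ versus $i-1$ offset in~(iv). Your only departures are cosmetic --- the explicit ``move log''/induction framing, and deriving the equivalence in~(v) from~(iv) rather than re-running the cases --- so there is nothing substantive to flag.
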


\begin{proof}[Proof of Claim~\ref{cla:regular0}]

\begin{enumerate}[label={\textup{(\roman*)}}]
    \item This is a direct consequence of the definition of regular play.
    \item Since Right has picked $\tau_k$ for all $k\in\segment{1}{i}$, all trap-edges in $\bigcup_{1 \leq k \leq i} T_k$ are dead. Now, consider some $\{t_{i'},\tau_{i'},u\} \in T_{i'}$ where $i'\in\segment{i+1}{n}$. Clearly, $t_{i'}$ and $\tau_{i'}$ are unpicked. Let $e$ be the red edge containing $u$. By definition of a trap-edge, there are three possibilities for $e$. If $e \in G_k^R$ for some $k\in\segment{i'+1}{n}$, then $u$ is unpicked since $k > i$. If $e$ is a destruction-edge of the form $\{b_j,d_j\}$, then $u$ is unpicked since those vertices are never picked during regular play. Finally, if $e$ is a destruction-edge of the form $\{b_j^r,d_j^r\}$ where $\ind(\ell_j^r) \geq i'$, then $u$ could only have been picked during step~(2) or (4) of some round of regular play, however this is impossible: since $\ind(\ell_j^r) > i$, the key vertex $v(\ell_j^r)$ has not yet been picked.
    \item Let $e=\{v(\ell_j^r),a_j^r,d_j^r\}$ be a link-edge. First, suppose that $\ind(\ell_j^r) \leq i$, \textit{i.e.} $v(\ell_j^r)$ has already been picked. If Right has picked $v(\ell_j^r)$, then $e$ is dead. If Left has picked $v(\ell_j^r)$, then Right has been forced to kill $e$ by picking $a_j^r$ during step~(4) of round~$\ind(\ell_j^r)$ of regular play. Finally, if $\ind(\ell_j^r) > i$, then $e$ is intact as $v(\ell_j^r)$ has not yet been picked.
    \item All destruction-edges of the form $\{b_j,d_j\}$ are intact since those vertices are never picked during regular play. Consider a destruction-edge of the form $e=\{b_j^r,d_j^r\}$. Assume $\ind(\ell_j^r) \leq i$, otherwise $e$ is clearly intact. If Left has picked $v(\ell_j^r)$, then Left has also killed $e$ by playing the greedy move $d_j^r$ during step~(4) of round~$\ind(\ell_j^r)$ of regular play. If Right has picked $v(\ell_j^r)$ and $\ind(\ell_j^r) \leq i-1$, then Left has been forced to kill $e$ by picking $d_j^r$ during step~(2) of round~$\ind(\ell_j^r)+1$ of regular play. If Right has picked $v(\ell_j^r)$ and $\ind(\ell_j^r) = i$, then $e$ is intact as the corresponding greedy move has not yet been played (it will during round~$i+1$ of regular play if $i<n$).
    \item The fact that Left has not picked any vertex in butterfly-edges and that the butterfly-edges $\{b_j,b_j^4,b_j^5\}$ and $\{b_j,b_j^5,b_j^6\}$ are intact for all $j\in\segment{1}{m}$ is a direct consequence of the definition of regular play. Now, let $j\in\segment{1}{m}$. If Left has picked $v(\ell_j^1)$, $v(\ell_j^2)$ and $v(\ell_j^3)$, then, by definition of regular play, Right has played no move inside the butterfly-edges in $B_j$ and Left has killed the destruction-edges $\{b_j^1,d_j^1\}$, $\{b_j^2,d_j^2\}$ and $\{b_j^3,d_j^3\}$ with greedy moves. If $v(\ell_j^r)$ is unpicked for some $r\in\segment{1}{3}$, then $\ind(\ell_j^r) > i$ so $\{b_j^r,d_j^r\}$ is intact. Finally, suppose that Left has picked $v(\ell_j^r)$ for some $r\in\segment{1}{3}$, which implies that $\ind(\ell_j^r) \leq i$. If $\ind(\ell_j^r) \leq i-1$, then Right has killed $\{b_j,b_j^1,b_j^2\}$ or $\{b_j,b_j^2,b_j^3\}$ by playing the greedy move $b_j^r$ during step~(2) of round~$\ind(\ell_j^r)+1$ of regular play. If $\ind(\ell_j^r)=i$, then $\{b_j^r,d_j^r\}$ is intact. \cqed \qedhere
\end{enumerate}
 
\end{proof}

We now show that regular play leads to the desired result for the game, \textit{i.e.} Left wins correspond to Falsifier wins.

\begin{claim}\label{cla:regular1}
    Assume both players are constrained to follow regular play during Phase 1. Then, Left has a winning strategy on $\Ga$ as the first player if and only if Falsifier has a winning strategy for \QBF~on $\phi$ as the first player.
\end{claim}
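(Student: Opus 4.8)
The plan is to show that, once both players are locked into regular play during Phase~1, the whole game collapses to the evaluation game for $\phi$, with Left in the role of Falsifier and Right in the role of Satisfier. By the very definition of regular play, every Phase~1 move is forced or greedy except for Left's choice in step~(1) of each round (picking $x_i^{\sf T}$ or $x_i^{\sf F}$, i.e.\ fixing $\mu(x_i)$) and Right's choice in step~(3) (picking $y_i^{\sf T}$ or $y_i^{\sf F}$, i.e.\ fixing $\mu(y_i)$). These free choices occur in the order $x_1,y_1,\ldots,x_n,y_n$ with Left moving first, exactly mirroring \QBF. Hence strategies for Left (resp.\ Right) constrained to regular play in Phase~1 are in bijection with strategies for Falsifier (resp.\ Satisfier), and it suffices to prove that the outcome after such a Phase~1 depends only on the resulting valuation $\mu$: Left wins if $\mu \not\models \phi$, and the game is a draw if $\mu \models \phi$. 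Both implications of the claim then follow by transporting winning strategies across this bijection, a Falsifier strategy forcing $\mu\not\models\phi$ becoming a Left strategy that wins in Phase~2, and conversely.

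Next I would read off the end-of-Phase-1 position from Claim~\ref{cla:regular0}. The definition of $v(\ell)$ gives that Left owns the key vertex $v(\ell)$ precisely when the literal $\ell$ is false under $\mu$; thus Left owns all of $v(\ell_j^1),v(\ell_j^2),v(\ell_j^3)$ exactly when the clause $c_j$ is falsified. By Claim~\ref{cla:regular0}(v), this is equivalent to butterfly~$j$ being ``fully usable'' for Left, namely both left-wing edges intact and the three destruction-edges $\{b_j^r,d_j^r\}$ dead, the right wing being always intact. Moreover, parts (i)--(iv) show that all guide-, trap- and link-edges are dead, so the only live blue edges entering Phase~2 are butterfly-edges, while the only live red edges are the pairwise disjoint destruction-edges $\{b_j,d_j\}$ together with the still-pending $\{b_j^r,d_j^r\}$ arising from index-$n$ true literals.

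The heart of the proof is the Phase~2 analysis. Exactly as in the introductory butterfly example (Figure~\ref{fig:butterflybomb}), a butterfly is won by whoever first claims its center $b_j$: with the center and both wings intact, Left makes an unstoppable double threat on whichever wing Right fails to block, whereas if one wing already carries a dead edge, Right merely blocks the junction of the intact wing and survives. Since all live red edges are pairwise disjoint, Right can never complete one and Left can neutralise every red threat by a pairing strategy (Lemma~\ref{lem:pairing}). I would therefore conclude: if $\mu\not\models\phi$, Left seizes the center of the fully intact butterfly of an unsatisfied clause and wins; if $\mu\models\phi$, every butterfly is either already damaged or can be damaged reactively, so Right holds the draw.

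The main obstacle is precisely this last, tempo-sensitive step, since Phase~2 begins on Left's move. The delicate case is a clause satisfied only by an index-$n$ literal owned by Right: its butterfly is still fully intact when Phase~2 starts, so one must check that Right can nonetheless defend it. The key is that the pending destruction move $b_j^r$ doubles as a forcing move through the intact red edge $\{b_j^r,d_j^r\}$: when Left grabs the center, Right replies with $b_j^r$, which simultaneously kills a left-wing edge and forces Left to spend a move answering $d_j^r$, buying Right the tempo to block the intact wing. Verifying that this reactive defence---combined with the pairing strategy against the disjoint edges $\{b_j,d_j\}$---neutralises every butterfly of a satisfied clause, while a single intact butterfly always wins for Left, is where Claim~\ref{cla:regular0}'s exact inventory of dead and intact edges is indispensable.
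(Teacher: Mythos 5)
Your overall architecture matches the paper's: the same identification of Phase-1 choices with \QBF~moves, the same use of Claim~\ref{cla:regular0}(v) to translate ``clause falsified by $\mu$'' into ``fully usable butterfly'', the same double-threat win for Left (with pairing against the pairwise disjoint red edges to absorb Right's stalling), and even the same tempo trick for the delicate case of a clause whose only true literals have index~$n$: Right's reply $b_j^r$ through the intact red edge $\{b_j^r,d_j^r\}$ forces $d_j^r$ and buys the move needed to block $b_j^5$. However, your drawing strategy for Right has a genuine hole: it is triggered only ``when Left grabs the center'', and that is too late. Phase 2 opens with Left to move, and nothing stops Left from first taking a wing junction, say $b_j^5$; note that by Claim~\ref{cla:regular0}(v) the right wing $\{b_j,b_j^4,b_j^5\}$, $\{b_j,b_j^5,b_j^6\}$ and the center are intact for \emph{every} $j$ at the end of Phase 1, whether or not $c_j$ is satisfied. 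If Right's strategy has no reaction to this non-center move, Left takes $b_j$ next and then holds $b_j$ and $b_j^5$, a double threat on $b_j^4$ and $b_j^6$ that no single reply can parry; in particular your prescribed reply $b_j^r$ does not help, since Left completes a blue edge before Right can complete $\{b_j^r,d_j^r\}$. So the guiding principle ``a butterfly is won by whoever first claims its center'' does not, by itself, yield a well-defined defence, and the draw direction is not established as written.

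The repair is exactly what the paper does, and it uses a resource you listed but never exploited: every center $b_j$ is itself a \emph{forcing} move for Right, through the intact red edge $\{b_j,d_j\}$. Immediately after Left's first Phase-2 move $u$, Right proactively picks $b_j$ for every $j$ with $u \notin \{b_j,d_j\}$; each such pick threatens $d_j$, so all of Left's replies are forced and he never gets two free moves inside any butterfly. This sweep kills every butterfly except possibly the unique $j_0$ with $u \in \{b_{j_0},d_{j_0}\}$, and only there does your tempo trick (or an already-dead left-wing edge) come into play. Equivalently, your reactive trigger would have to be widened from ``Left grabs the center'' to ``Left plays any vertex of a butterfly'', with Right answering by seizing that butterfly's center. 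With that correction the rest of your argument (inventory from Claim~\ref{cla:regular0}, the correspondence of key-vertex ownership with truth values, and the Left-wins direction) goes through as in the paper.
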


\begin{proof}[Proof of Claim~\ref{cla:regular1}]
    First, suppose that Falsifier has a winning strategy $\strat$ for \QBF~on $\phi$ as the first player. We define the following winning strategy for Left on $\Ga$ as the first player. During Phase 1, for all $i\in\segment{1}{n}$, Left chooses $\mu(x_i)$ depending on $\mu(x_1),\mu(y_1),\ldots,\mu(x_{i-1}),\mu(y_{i-1})$ and according to $\strat$, so that $\mu$ falsifies some clause $c_j$ of $\phi$. By definition of $\mu$, this means Left has picked $v(\ell_j^1)$, $v(\ell_j^2)$ and $v(\ell_j^3)$. Therefore, by item (v) of Claim~\ref{cla:regular0}, the edges in $B_j$ form an intact blue butterfly and all red edges intersecting that butterfly are dead apart from $\{b_j,d_j\}$. As Left is next to play, he picks $b_j$. After that, Right may stall by playing inside intact red edges (in which case Left picks the other vertex of the red edge each time), but at some point she will have to pick some $u$ that does not threaten a win in one move. By symmetry, assume $u \not\in \{b_j^4,b_j^5,b_j^6\}$. Left then picks $b_j^5$ and will win the game by picking $b_j^4$ or $b_j^6$ with his next move.

    Now, suppose that Satisfier has a winning strategy $\strat$ for \QBF~on $\phi$ as the second player.
    We define the following non-losing (actually, drawing) strategy for Right on $\Ga$ as the second player.
    During Phase 1, for all $i\in\segment{1}{n}$, Right chooses $\mu(y_i)$ depending on $\mu(x_1),\mu(y_1),\ldots,\mu(x_i)$ and according to $\strat$, so that $\mu$ ends up satisfying $\phi$. By Claim~\ref{cla:regular0} applied to $i=n$, all blue edges that are not butterfly-edges are dead. Moreover, for all $j\in\segment{1}{m}$, the clause $c_j$ being satisfied means that Right has picked at least one of $v(\ell_j^1)$, $v(\ell_j^2)$ or $v(\ell_j^3)$, so item (v) of Claim~\ref{cla:regular0} ensures that one of the butterfly-edges $\{b_j,b_j^1,b_j^2\}$ and $\{b_j,b_j^2,b_j^3\}$ is dead or some destruction-edge among $\{b_j^1,d_j^1\}$, $\{b_j^2,d_j^2\}$ and $\{b_j^3,d_j^3\}$ is intact. Also recall that the destruction-edge $\{b_j,d_j\}$ is intact for all $j\in\segment{1}{m}$ by Claim~\ref{cla:regular0}. As Left is next to play, he picks some $u$. Since Left had not picked any vertex in butterfly-edges before, Right's move is not forced. For all $j\in\segment{1}{m}$ (in any order) such that $u \not\in \{b_j,d_j\}$, Right picks $b_j$, forcing Left to pick $d_j$. If $u \not\in \bigcup_{1 \leq j \leq m} \{b_j,d_j\}$, then all blue edges are dead and the proof is over. Otherwise, let $j_0$ be the unique index such that $u \in \{b_{j_0},d_{j_0}\}$. Assume $u=b_{j_0}$, otherwise Right kills all remaining blue edges by picking $b_{j_0}$ herself. Recall that, either one of the butterfly-edges $\{b_{j_0},b_{j_0}^1,b_{j_0}^2\}$ and $\{b_{j_0},b_{j_0}^2,b_{j_0}^3\}$ was already dead by the end of Phase 1, or some destruction-edge $\{b_{j_0}^r,d_{j_0}^r\}$ was intact. In the latter case, Right picks $b_{j_0}^r$, forcing Left to pick $d_{j_0}^r$, so that one of the butterfly-edges $\{b_{j_0},b_{j_0}^1,b_{j_0}^2\}$ and $\{b_{j_0},b_{j_0}^2,b_{j_0}^3\}$ is now dead. Right picks $b_{j_0}^5$. At most one butterfly-edge is not dead at this point, and that edge has two unpicked vertices, so Right will kill it with her next move. \cqed
\end{proof}

\subsection{Optimality of regular play}\strut
\indent We now show that regular play is indeed optimal for both players. We start with Right, whose case is easier as her moves are hugely constrained by the trap-edges.

\begin{claim}\label{cla:regular2}
    Assume that Phase 1 is not over and that both players have followed regular play thus far. If Right is next to play, then it is optimal for Right to follow regular play with her next move.
\end{claim}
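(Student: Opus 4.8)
The plan is to first reduce optimality to non-losing. Since the red edges are pairwise disjoint, $E_R$ is itself a complete pairing of $(V,E_R)$, so Lemma~\ref{lem:pairing} gives Left a non-losing strategy; thus Right can never win, and her best achievable outcome is a draw. It follows that a move of Right is optimal as soon as it is non-losing (i.e.\ it preserves a drawing continuation whenever one exists). I would then split Right's possible regular moves into three families, read off from the definition of regular play: the \emph{forced} moves of a guide-edge sequence (namely $x_i^{\sf F}$ and $s_i$ in step~(1), and $\omega'_i,\zeta'_i,\tau_i,\lambda_i$ in step~(3)), the \emph{greedy} moves $b_j^r$ of step~(2), and the single \emph{free} decision of step~(3) where Right picks $y_i^{\sf T}$ or $y_i^{\sf F}$.

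For a forced move, I would use Claim~\ref{cla:regular0} to exhibit the blue edge of size~$1$ that Left's previous move has just created: Left picking $x_i^{\sf T}$ (resp. $w'_i$) turns the updated guide-edge $\{x_i^{\sf T},x_i^{\sf F}\}$ (resp. $\{w'_i,s_i\}$) into the singleton $\{x_i^{\sf F}\}$ (resp. $\{s_i\}$), and likewise all along the step~(3) sequence. Blocking that singleton is then the only move avoiding immediate defeat, hence optimal. For a greedy move $b_j^r$, I would verify the hypotheses of Lemma~\ref{lem:greedy} for the red edge $\{b_j^r,d_j^r\}$: at the start of step~(2) there is no edge of size~$1$, and the only other edges through $d_j^r$ are the link-edge $\{v(\ell_j^r),a_j^r,d_j^r\}$ and the trap-edges $\{t_k,\tau_k,d_j^r\}$ with $k\le i-1$, all of which are dead by items~(iii) and~(ii) of Claim~\ref{cla:regular0}. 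Thus $d_j^r$ lies only in $\{b_j^r,d_j^r\}$, so Lemma~\ref{lem:greedy} yields optimality directly.

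The decisive case is the free decision of step~(3), and I expect it to be the main obstacle. Here no edge of size~$1$ exists, so Right is not strictly forced; the crucial feature is that Left has already picked $t_i$, so every trap-edge of $T_i$ has become a blue edge $\{\tau_i,u\}$ of size~$2$, all sharing the vertex $\tau_i$. The destruction-edges $\{b_j,d_j\}$ supply at least two such values of $u$ (for instance $b_1$ and $d_1$) that are never picked during Phase~1 and that Right cannot pick without threatening a red completion. I would argue that any Right move failing to threaten a red edge hands Left a free tempo with which he picks $\tau_i$, simultaneously creating the singletons $\{b_1\}$ and $\{d_1\}$; this double threat cannot be parried and Left wins. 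Hence every non-losing reply of Right must force Left by threatening a red completion, and picking $y_i^{\sf T}$ (resp. $y_i^{\sf F}$) does so by creating the singleton red edge $\{\omega_i\}$ (resp. $\{\omega'_i\}$). I would finish this case by checking that the ensuing step~(3) sequence keeps Left forced at each of his turns, so that Right reaches and picks $\tau_i$ — disarming the trap — before Left can ever exploit it.

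To conclude, I would assemble the pieces: any deviation that does not force Left loses to the trap, while any deviation that does force Left cannot exceed a draw since Right can never win, so neither can beat the regular decision move. When the position is drawable, the regular move preserves drawability by taking the value of $y_i$ prescribed by Satisfier's winning strategy, in the correspondence underlying Claim~\ref{cla:regular1}; this makes it non-losing and therefore optimal. The genuine difficulty is thus entirely concentrated in the step~(3) trap analysis: one must confirm that the armed trap $T_i$ punishes \emph{every} non-forcing reply and that the forced sequence reliably hands $\tau_i$ to Right in time.
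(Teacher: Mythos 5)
Your reduction of optimality to non-losing, and your treatment of the forced moves and of the greedy moves $b_j^r$, essentially match the paper's proof (the paper additionally observes, for the forced moves, that Right cannot instead win in one move herself --- a point your argument needs but glosses over, since a blocking move is only genuinely forced when the blocker has no immediate win of her own). Your punishment of a \emph{non-threatening} deviation at step~(3) is a correct variant of the paper's: you have Left pick $\tau_i$ and win with the two blue singletons coming from the trap-edges $\{t_i,\tau_i,b_1\}$ and $\{t_i,\tau_i,d_1\}$, whereas the paper has Left win with the blue $P_3$ $(y_i^{\sf F},y_i^{\sf T},\zeta_i)$; both work.

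The genuine gap is in the decision-move case, for deviations that \emph{do} threaten a red edge but are not $y_i^{\sf T}$ or $y_i^{\sf F}$: picking $\omega_i$, $\omega'_i$, $\zeta_i$, $\zeta'_i$, $\tau_i$, $\delta_i$, a vertex of a red guide-edge of a later gadget (e.g.\ $x_k^{\sf T}$ with $k>i$), or $b_j$, $d_j$, $b_j^r$, $d_j^r$. You never show these moves are no better than the regular move; instead you argue that the regular move ``preserves drawability by taking the value of $y_i$ prescribed by Satisfier's winning strategy, in the correspondence underlying Claim~\ref{cla:regular1}''. That appeal is circular: Claim~\ref{cla:regular1} assumes \emph{both} players follow regular play for the remainder of the game, so it cannot certify that the position after $y_i^{\sf T}$ or $y_i^{\sf F}$ is drawable against arbitrary Left continuations, and the equivalence ``the current position is drawable iff Satisfier wins from the current partial valuation'' is precisely what the combination of Claims~\ref{cla:regular1}--\ref{cla:regular3} is meant to establish. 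Since optimality requires exhibiting a regular move that achieves the game value whenever that value is a draw, and you cannot identify which move draws without the full theorem, the only non-circular route is the paper's: prove that \emph{every} $u \notin \{y_i^{\sf T},y_i^{\sf F}\}$ loses for Right. This is where the trap-edges do their real work, and it is the part your proposal omits: if Right's deviation $u$ threatens via a red edge $e=\{u,u'\}$ other than $\{y_i^{\sf T},\omega_i\}$, $\{y_i^{\sf F},\omega'_i\}$, $\{\zeta_i,\zeta'_i\}$, $\{\tau_i,\delta_i\}$, then $\{t_i,\tau_i,u'\} \in T_i$, so Left's forced block $u'$ simultaneously arms the trap; Right is then forced to pick $\tau_i$, Left is forced to pick $\delta_i$, Right is forced to pick $\lambda_i$, and since $\lambda_i$ carries no red threat, Left follows up with the winning blue $P_3$ $(y_i^{\sf F},y_i^{\sf T},\zeta_i)$. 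The six in-gadget threatening deviations are then dispatched by short explicit forced sequences, each ending in a winning blue $P_3$. Without this analysis, your concluding sentence that no deviation ``can beat the regular decision move'' is unsupported: a priori, one of those threatening deviations could draw while both $y_i^{\sf T}$ and $y_i^{\sf F}$ lose.
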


\begin{proof}[Proof of Claim~\ref{cla:regular2}]
    Let $\Ga'=(V',E'_L,E'_R)$ be the updated game, and let $i\in\segment{1}{n}$ be the current round of regular play. There are three types of moves for Right in regular play: forced move, greedy move, or ``decision move'' between $y_i^{\sf T}$ and $y_i^{\sf F}$.
    \begin{itemize}
        \item Let us first consider the case where regular play suggests a forced move. Clearly, Left is indeed threatening to win in one move (recall that, by picking $x_i^{\sf T}$ or $x_i^{\sf F}$ at the beginning of the round, Left has ``activated'' every edge of the form $\{u,u'\}^{(i)}$ into an actual blue edge of size~$2$: $\{u,u'\}$ in $\Ga'$). Therefore, the only way Right's move would not actually be forced is if Right could win in one move herself, but this is not the case since regular play has Left defending all of Right's threats.
        \item Now, we consider the case where regular play suggests a greedy move $b_j^r$ during step~(2), which implies that $\ind(\ell_j^r)=i-1$ and that Right has picked $v(\ell_j^r)$ during round~$i-1$ of regular play. Note that neither player is threatening a win in one move. Therefore, to show that this is a valid greedy move as per Lemma \ref{lem:greedy}, it suffices to show that $d_j^r \not\in e$ for all $e \in E'_L$. The only blue edges containing $d_j^r$ in the original game were the link-edge $\{v(\ell_j^r),a_j^r,d_j^r\}$ and some trap-edges. The link-edge $\{v(\ell_j^r),a_j^r,d_j^r\}$ is dead since Right has picked $v(\ell_j^r)$. Now, consider a trap-edge $e=\{t_k,\tau_k,d_j^r\}$ for some $k\in\segment{1}{n}$. By definition of the trap-edges, we have $\ind(\ell_j^r) \geq k$. Since $\ind(\ell_j^r)=i-1$, we have $k<i$, so Right has already killed $e$ by picking $\tau_k$ during round~$k$ of regular play.
        \item Finally, consider the case where regular play suggests picking $y_i^{\sf T}$ or $y_i^{\sf F}$ (the updated variable gadget is pictured in Figure \ref{fig:23-variable2}). Suppose that Right picks some $u \in V' \setminus \{y_i^{\sf T},y_i^{\sf F}\}$: we show that Left has a winning strategy after that move.
        We can assume that there exists $e\in E'_R$ of the form $e=\{u,u'\}$, otherwise Left has no forced move and wins in two moves by picking $y_i^{\sf T}$ for instance since $\{y_i^{\sf T},y_i^{\sf F}\},\{y_i^{\sf T},\zeta_i\} \in E'_L$ (we say that $(y_i^{\sf F},y_i^{\sf T},\zeta_i)$ is a blue $P_3$). Let us determine the set $E'_R$. We know which red edges were intact by the end of round~$i-1$ of regular play thanks to Claim~\ref{cla:regular0}, moreover some more red edges were killed during steps (1) and (2) of round~$i$: $\{x_i^{\sf T},w_i\}$, $\{x_i^{\sf F},w'_i\}$, $\{s_i,t_i\}$, and all $\{b_j^r,d_j^r\}$ such that $\ind(\ell_j^r)=i-1$ and Right has picked $v(\ell_j^r)$. All in all, we get:

        $$ E'_R = \{\{y_i^{\sf T},\omega_i\},\{y_i^{\sf F},\omega'_i\},\{\zeta_i,\zeta'_i\},\{\tau_i,\delta_i\}\} \cup \bigcup_{i+1 \leq k \leq n} G_k^R \cup \bigcup_{1 \leq j \leq m} \{\{b_j,d_j\}\} \cup \bigcup_{\substack{1 \leq j \leq m \\ 1 \leq r \leq 3 \\ \ind(\ell_j^r) \geq i}} \{\{b_j^r,d_j^r\}\}.$$

        \begin{itemize}
            \item[1)] Case $e \not\in \{\{y_i^{\sf T},\omega_i\},\{y_i^{\sf F},\omega'_i\},\{\zeta_i,\zeta'_i\},\{\tau_i,\delta_i\}\}$. This is where the trap-edges come into play. Left has triggered the trap-edges in $T_i$ by picking $t_i$ earlier. If Right had followed regular play, then she would have ended up picking $\tau_i$ after a sequence of forced moves, thus defusing these traps. However, she has not, and she will lose the game. Indeed, by definition of $T_i$, there exists the trap-edge $\{t_i,\tau_i,u'\}$. Left now picks $u'$ (forced move). Since Left had picked $t_i$ during step~(1), Right is forced to pick $\tau_i$. In turn, Left is forced to pick $\delta_i$ because $\{\tau_i,\delta_i\} \in E'_R$, then Right is forced to pick $\lambda_i$ because $\{\delta_i,\lambda_i\} \in E'_L$. Since $\lambda_i$ is in no red edge, Left has no forced move and wins using the blue $P_3$ $(y_i^{\sf F},y_i^{\sf T},\zeta_i)$ for instance.
            \item[2)] Case $e \in \{\{y_i^{\sf T},\omega_i\},\{y_i^{\sf F},\omega'_i\},\{\zeta_i,\zeta'_i\},\{\tau_i,\delta_i\}\}$ \textit{i.e.} $u \in \{\omega_i,\omega'_i,\zeta_i,\zeta'_i,\tau_i,\delta_i\}$. All options lead to a sequence of forced moves, at the end of which Left can use a blue $P_3$ to win:\\
            \begin{tikzcd}[row sep = tiny , sep = small]
                \textcolor{red}{u=\omega_i} \arrow{r}{\text{f}} & \textcolor{blue}{u'=y_i^{\sf T}}\,:\quad\text{Left wins with the blue $P_3$ $(y_i^{\sf F},y_i^{\sf T},\zeta_i)$;}
            \end{tikzcd}\\
            \begin{tikzcd}[row sep = tiny , sep = small]
                \textcolor{red}{u=\omega'_i} \arrow{r}{\text{f}} & \textcolor{blue}{u'=y_i^{\sf F}}\,:\quad\text{Left wins with the blue $P_3$ $(y_i^{\sf T},y_i^{\sf F},\zeta'_i)$;}
            \end{tikzcd}\\
            \begin{tikzcd}[row sep = tiny , sep = small]
                \textcolor{red}{u=\zeta_i} \arrow{r}{\text{f}} & \textcolor{blue}{u'=\zeta'_i}\,:\quad\text{Left wins with the blue $P_3$ $(y_i^{\sf F},\zeta'_i,\tau_i)$;}
            \end{tikzcd}\\
            \begin{tikzcd}[row sep = tiny , sep = small]
                \textcolor{red}{u=\zeta'_i} \arrow{r}{\text{f}} & \textcolor{blue}{u'=\zeta_i}\,:\quad\text{Left wins with the blue $P_3$ $(y_i^{\sf T},\zeta_i,\tau_i)$;}
            \end{tikzcd}\\
            \begin{tikzcd}[row sep = tiny , sep = small]
                \textcolor{red}{u=\tau_i} \arrow{r}{\text{f}} &
                \textcolor{blue}{u'=\delta_i} \arrow{r}{\text{f}} &
                \textcolor{red}{\lambda_i} \arrow{r}{} &
                \textcolor{blue}{y_i^{\sf T}}\,:\quad\text{Left wins with the blue $P_3$ $(y_i^{\sf F},y_i^{\sf T},\zeta_i)$;}
            \end{tikzcd}\\
            \begin{tikzcd}[row sep = tiny , sep = small]
                \textcolor{red}{u=\delta_i} \arrow{r}{\text{f}} & \textcolor{blue}{u'=\tau_i}\,:\quad\text{Left wins with the blue $P_3$ $(\zeta_i,\tau_i,\zeta'_i)$.}
            \end{tikzcd} \cqed\qedhere
        \end{itemize}

    \end{itemize}
\end{proof}

\begin{claim}\label{cla:regular3}
    Assume that Phase 1 is not over and that both players have followed regular play thus far. If Left is next to play, then it is optimal for Left to follow regular play with his next move.
\end{claim}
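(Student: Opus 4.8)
The plan is to follow the same skeleton as the proof of Claim~\ref{cla:regular2}, distinguishing the three kinds of move that regular play prescribes for Left in the current round~$i$: a forced move, a greedy move from step~(4), or the decision move of step~(1) between $x_i^{\sf T}$ and $x_i^{\sf F}$. For a forced move I would argue as for Right: the move answers a red edge that Right is about to complete, and since the live red edges are pairwise disjoint and of size~$2$ there is a unique blocking vertex, namely the one of regular play. It only remains to rule out that Left could instead win outright; but by Claim~\ref{cla:regular0} no blue edge has become a singleton at this point (the butterflies are intact and every surviving guide-edge has size~$2$ or~$3$), so Left cannot win in one move and the unique block is his only non-losing option, hence optimal. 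For a greedy move I would invoke Lemma~\ref{lem:greedy}: once Left has picked $v(\ell_j^r)$ in the current round, the link-edge $\{v(\ell_j^r),a_j^r,d_j^r\}$ has become the size-$2$ blue edge $\{a_j^r,d_j^r\}$, and since $a_j^r$ occurs in no other edge of the construction, every updated edge containing $a_j^r$ contains $d_j^r$; together with the absence of any singleton edge (Claim~\ref{cla:regular0} again) this is exactly the hypothesis of Lemma~\ref{lem:greedy}, so picking $d_j^r$ is optimal, and several such greedy moves are handled one after another.

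The decision move is the main obstacle. The crucial structural fact, read off from Figure~\ref{fig:23-variable1} and Claim~\ref{cla:regular0}(i), is that at the start of round~$i$ the edge $\{x_i^{\sf T},x_i^{\sf F}\}$ is the \emph{only} blue edge of size~$2$ in the whole updated game, every other live blue edge having size~$3$. I would then show that no move $u\notin\{x_i^{\sf T},x_i^{\sf F}\}$ does strictly better for Left than the best of $x_i^{\sf T}$ and $x_i^{\sf F}$. The guiding principle is that Left's only way to win is to complete a butterfly, and by Claim~\ref{cla:regular0}(v) a butterfly $B_j$ becomes a usable double threat only once both of its cherries are intact, which happens exactly when Left has already secured the three key vertices $v(\ell_j^1),v(\ell_j^2),v(\ell_j^3)$, i.e.\ when $\mu$ already falsifies $c_j$; in that situation regular play wins anyway, so such a branch cannot make a deviation profitable.

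It then remains to exhibit, for each remaining deviation $u$, a response by Right reaching a position no better for Left than a regular decision. When $u$ lies in a butterfly, Right answers with $b_j^5$, collapsing the always-intact right cherry $\{b_j,b_j^4,b_j^5\},\{b_j,b_j^5,b_j^6\}$, so Left can force a double threat only if the left cherry is intact as well, which is again the already-falsified case. Otherwise $u$ lies in the guide/variable structure, where picking $u$ creates no singleton blue edge (the unique size-$2$ blue edge $\{x_i^{\sf T},x_i^{\sf F}\}$ does not contain $u$); Right is then free to seize the variable by taking $x_i^{\sf T}$ and $x_i^{\sf F}$ through the red threats $\{x_i^{\sf T},w_i\}$ and $\{x_i^{\sf F},w'_i\}$, which forces Left's replies inside the gadget and leaves Right with more control of $x_i$ than in regular play while her butterfly-destruction power is only increased. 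The delicate point, and the real work, is that some of these local deviations create a \emph{fork} rather than a single threat: for instance $u=s_i$, after which the four edges $\{w_i,x_i^{\sf T}\},\{w_i,x_i^{\sf F}\},\{w'_i,x_i^{\sf T}\},\{w'_i,x_i^{\sf F}\}$ form a $K_{2,2}$ that admits no complete pairing, so Right cannot merely cite Lemma~\ref{lem:pairing} and must be shown to defend it explicitly as a Breaker. Once every such local configuration is checked, no deviation beats a regular decision move, which completes the case and the claim.
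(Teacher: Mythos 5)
Your skeleton matches the paper's: reduce optimality to ``every deviation lets Right draw'' (legitimate because the pairwise disjoint red edges give Left a non-losing pairing strategy by Lemma~\ref{lem:pairing} no matter what he plays), split into forced/greedy/decision moves, dispatch the greedy case via Lemma~\ref{lem:greedy} and the isolation of $a_j^r$, and set aside the case where some clause is already fully falsified (where regular play wins anyway). Two gaps remain, one minor and one fatal. The minor one: in the forced-move case you rule out a one-move win for Left by citing Claim~\ref{cla:regular0}, but that claim describes positions at the \emph{end} of a round, whereas forced moves occur mid-round, after Left has picked $t_i$, when every trap-edge $\{t_i,\tau_i,u\}\in T_i$ is a latent single-vertex threat. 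The paper spends a full case analysis (over the three kinds of red edges defining $T_{i'}$) showing that the third vertex $u$ of every live trap-edge is still unpicked; your parenthetical about butterflies and guide-edges does not cover this, and the trap-edges are exactly the edges designed to be dangerous at that moment.

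The fatal gap is the decision move, which you correctly identify as ``the real work'' but then do not do --- and the local responses you do sketch are wrong. If Left deviates with $u=b_j$, your reply $b_j^5$ is non-forcing: $b_j^5$ lies in \emph{no} red edge, so Left gets a free move, plays $b_j^2$, and now threatens to complete both $\{b_j,b_j^1,b_j^2\}$ and $\{b_j,b_j^2,b_j^3\}$; Right has no one-move win and cannot block both, so Left wins. The same tempo problem undermines ``Right seizes the variable through the red threats'': any non-forcing move hands Left the move he needs to cash a fork such as your $K_{2,2}$ at $u=s_i$. The paper's drawing strategy is not local at all: Right strings together \emph{forcing} moves (each picking one vertex of an intact size-$2$ red edge), killing trap-edges, blue guide-edges, link-edges and butterflies in an order that depends on where $u$ lies, so that Left is never free to develop a second threat; the unique non-forcing move is saved for last, when all blue edges are dead. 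In particular, for $u=b_{j_0}$ she reaches the butterfly only after this forcing sequence, and then uses the standing assumption (no clause fully falsified) together with Claim~\ref{cla:regular0}(v): either some destruction-edge $\{b_{j_0}^r,d_{j_0}^r\}$ is intact, giving a forcing kill of a left-cherry edge before she plays $b_{j_0}^5$, or that cherry is already dead. Your closing sentence ``once every such local configuration is checked, no deviation beats a regular decision move'' is precisely the theorem to be proved, and the checking cannot be done configuration-by-configuration: maintaining the initiative globally is the entire mechanism.
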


\begin{proof}[Proof of Claim~\ref{cla:regular3}]
    Note that, if there exists a clause $c_j$ such that Left has already picked $v(\ell_j^1)$, $v(\ell_j^2)$ and $v(\ell_j^3)$, then Left can follow regular play until the end of Phase 1 (making arbitrary choices for the remaining key vertices) and win the game. Indeed, Claim~\ref{cla:regular2} already tells us that Right has no interest in deviating from regular play, and Left will win during Phase 2 as we have seen in the proof of Claim~\ref{cla:regular1}. Therefore, we assume that there is no clause $c_j$ such that Left has already picked $v(\ell_j^1)$, $v(\ell_j^2)$ and $v(\ell_j^3)$. This assumption will only be used once in this proof.

    Let $\Ga'=(V',E'_L,E'_R)$ be the updated game, and let $i\in\segment{1}{n}$ be the current round of regular play. Since $E'_R$ forms a complete pairing of $(V',E'_R)$, Lemma \ref{lem:pairing} ensures that Left can play any move and still have a non-losing pairing strategy after that move. In particular, to establish optimality of the move(s) that regular play suggests for Left, it suffices to show that any other move for Left gives Right a drawing strategy. There are three types of moves for Left in regular play: forced move, greedy move, or ``decision move'' between $x_i^{\sf T}$ and $x_i^{\sf F}$.
    \begin{itemize}

        \item Let us start by checking that Left cannot win in one move. This is clear for the blue guide-edges and the link-edges, as regular play clearly has Right defending these threats. As for the butterfly-edges, Left never even picks one of their vertices during regular play. Now, consider a trap-edge $\{t_{i'},\tau_{i'},u\} \in T_{i'}$, and assume that Right has not yet picked $\tau_{i'}$ (otherwise that edge is already dead), which implies that $i' \geq i$. We claim that $u$ is also unpicked, so that Left cannot win in one move with that edge. To show this, we consider all three cases for the red edge $e$ that contains $u$, according to the definition of $T'_i$.
        \begin{itemize}[nolistsep,noitemsep]
            \item[--] First of all, suppose that $e \in G_k^R$ for some $k \in\segment{i'+1}{n}$. Since $k>i$, $u$ is clearly unpicked.
            \item[--] Next, suppose that $e=\{b_j,d_j\}$ for some $j\in\segment{1}{m}$. Since those vertices are never picked during regular play, $u$ is unpicked.
            \item[--] Finally, suppose that $e=\{b_j^r,d_j^r\}$ for some $j\in\segment{1}{m}$ and $r\in\segment{1}{3}$ such that $\ind(\ell_j^r) \geq i'$. Assume that $v(\ell_j^r)$ has been picked, otherwise $u$ is clearly unpicked. In particular, we necessarily have $\ind(\ell_j^r)=i'=i$. If Right has picked $v(\ell_j^r)$, then she will not play the greedy move $b_j^r$ until round~$i+1$ of regular play. If Left has picked $v(\ell_j^r)$, then he will not play the greedy move $d_j^r$ until step~(4) of the current round $i$ of regular play. Note that step~(4) has not been reached yet, otherwise step~(3) would have been over, so Right would have already picked $\tau_i=\tau_{i'}$.
        \end{itemize}
        We see that $u$ is unpicked in each case. In conclusion, Left cannot win in one move. In particular, every forced move suggested by regular play is indeed forced, as Right is clearly threatening to win in one move.

        \item Now, we consider the case where regular play suggests a greedy move $d_j^r$ during step~(4), which implies that $\ind(\ell_j^r)=i$ and that Left has picked $v(\ell_j^r)$ during the current round $i$ of regular play. We have just shown that Left cannot win in one move, moreover Right is clearly not threatening to win in one move either. Therefore, to show that this is a valid greedy move as per Lemma \ref{lem:greedy}, it suffices to show that the only edge (red or blue) containing $a_j^r$ in $\Ga'$ is the blue edge $\{a_j^r,d_j^r\}$. This is immediate, as the only edge (red or blue) containing $a_j^r$ in $\Ga$ is the blue edge $\{v(\ell_j^r),a_j^r,d_j^r\}$.
        
        \item Finally, we consider the case where regular play suggests picking $x_i^{\sf T}$ or $x_i^{\sf F}$. In particular, this means that round $i$ of regular play is just starting, so Claim~\ref{cla:regular0} applied to $i-1$ describes $\Ga'$ exactly. Suppose that Left picks some $u \in V' \setminus \{x_i^{\sf T},x_i^{\sf F}\}$ instead: we want to show that Right now has a drawing strategy. It is important to note that, by Claim~\ref{cla:regular0}, $\{x_i^{\sf T},x_i^{\sf F}\}$ is the only blue edge of size less than~$3$ in $\Ga'$, so that Right's next move is not forced. In an ideal scenario, Right would like to do the following (not necessarily in that order):
        \begin{itemize}[noitemsep,nolistsep]
            \item[--] Kill all remaining blue guide-edges, by picking $x_k^{\sf T}$ and $x_k^{\sf F}$ for all $k\in\segment{i}{n}$;
            \item[--] Kill all remaining trap-edges, by picking $t_k$ or $\tau_k$ for all $k\in\segment{i}{n}$;
            \item[--] Kill all remaining link-edges, by picking every unpicked $d_j^r$;
            \item[--] Kill all remaining butterfly-edges, by picking $b_j$ for all $j\in\segment{1}{m}$.
        \end{itemize}
        The idea is that all these vertices are in some intact red edge, so Right wants to pick all of them while forcing all of Left's moves.
        However, the vertex $u$ that Left has just picked may be one of these vertices, or the forced move associated with one of these vertices. Therefore, Right must adapt her plan to the vertex $u$.

        \begin{itemize}
        
        \item[1)] Case $u \in \bigcup_{1 \leq k \leq n} (V_k \setminus \{x_k^{\sf T},x_k^{\sf F}\})$. Let $k_0\in\segment{1}{n}$ (unique) such that $u \in V_{k_0}$.
            
            \begin{itemize}[noitemsep,nolistsep]

            \item[--] Right first kills all remaining trap-edges. If $u \in \{t_{k_0},s_{k_0},w_{k_0},w'_{k_0}\}$, then Right picks $\tau_{k_0}$ (which forces Left to pick $\delta_{k_0}$). Otherwise, Right picks $t_{k_0}$ (which forces Left to pick $s_{k_0}$). This way, she makes sure that $u$ does not contribute to a threat with the blue guide-edges in $G_{k_0}^L$. Then, for all $k\in\segment{i}{n}\setminus\quickset{k_0}$, she picks $t_k$ (which forces Left to pick $s_k$).
            
            \item[--] Right then kills all remaining butterfly-edges, by picking $b_j$ (which forces Left to pick $d_j$) for all $j\in\segment{1}{m}$. 
                
            \item[--] Right now kills all remaining link-edges, by picking every unpicked $d_j^r$ (which forces Left to pick $b_j^r)$. 
                
            \item[--] Finally, Right kills all remaining blue guide-edges. For all $k\in\segment{i}{n}\setminus\quickset{k_0}$, she picks $x_k^{\sf T}$ then $x_k^{\sf F}$ (which forces Left to pick $w_k$ then $w'_k$). Note that the move $x_k^{\sf F}$ was actually forced for Right because of the blue guide-edge $\{w_k,s_k,x_k^{\sf F}\}$. She does the same for $k_0$, except that she picks $x_{k_0}^{\sf F}$ first and $x_{k_0}^{\sf T}$ second if $u=w_{k_0}$. This way, the only move which does not force Left's answer is played last, when all blue edges are already dead.

            \end{itemize}

        \item[2)] Case $u \in (\bigcup_{1 \leq j \leq m} C_j) \setminus (\bigcup_{i \leq k \leq n} \{x_k^{\sf T},x_k^{\sf F},y_k^{\sf T},y_k^{\sf F}\})$. Let $j_0\in\segment{1}{m}$ (unique) such that $u \in C_{j_0}$.

            \begin{itemize}[noitemsep,nolistsep]
                
            \item[--] Right first kills all remaining trap-edges, by picking $t_k$ (which forces Left to pick $s_k$) for all $k\in\segment{i}{n}$.

            \item[--] Right then kills all remaining blue guide-edges, by picking $x_k^{\sf T}$ then $x_k^{\sf F}$ (which forces Left to pick $w_k$ then $w'_k$) for all $k\in\segment{i}{n}$.

            \item[--] Right now kills all remaining link-edges, by picking $y_k^{\sf T}$ and $y_k^{\sf F}$ (which forces Left to pick $\omega_k$ and $\omega'_k$ respectively) for all $k\in\segment{i}{n}$.

            \item[--] Finally, Right kills all remaining butterfly-edges. For this, she starts by picking $b_j$ (which forces Left to pick $d_j$) for all $j\in\segment{1}{m}\setminus\quickset{j_0}$. If $u \neq b_{j_0}$, then Right simply picks $b_{j_0}$ next to kill the last butterfly-edges. If $u=b_{j_0}$, then we use the assumption made at the beginning of the proof of this claim: one of $v(\ell_{j_0}^1)$, $v(\ell_{j_0}^2)$ or $v(\ell_{j_0}^3)$ has not been picked by Left. By Claim~\ref{cla:regular0}, this means that one of the destruction-edges $\{b_{j_0}^1,d_{j_0}^1\}$, $\{b_{j_0}^2,d_{j_0}^2\}$ and $\{b_{j_0}^3,d_{j_0}^3\}$ is intact, or that one of the butterfly-edges $\{b_{j_0},b_{j_0}^1,b_{j_0}^2\}$ and $\{b_{j_0},b_{j_0}^2,b_{j_0}^3\}$ is dead. As we have seen in the proof of Claim~\ref{cla:regular1}, this allows Right to hold on to a draw.

            \end{itemize}

        \item[3)] Case $u \in \bigcup_{i+1 \leq k \leq n} \{x_k^{\sf T},x_k^{\sf F}\}$. By symmetry, assume $u=x_{k_0}^{\sf T}$ for some $k_0\in\segment{i+1}{n}$.

            \begin{itemize}[noitemsep,nolistsep]
                
            \item[--] Right first kills all remaining butterfly-edges, by picking $b_j$ (which forces Left to pick $d_j$) for all $j\in\segment{1}{m}$. Left never threatens to win in one move with a trap-edge during the process, as $t_k$ and $\tau_k$ are both unpicked for all $k\in\segment{i}{n}$.

            \item[--] Right then kills all remaining link-edges, by picking every unpicked $d_j^r$ (which forces Left to pick $b_j^r$). Again, the trap-edges are not an issue.

            \item[--] Right now kills all remaining blue guide-edges in $G_k^L$ and trap-edges in $T_k$ for all $k\in\segment{i}{n}\setminus\quickset{k_0}$, by picking $x_k^{\sf T}$, $x_k^{\sf F}$ then $t_k$ (which forces Left to pick $w_k$, $w'_k$ then $s_k$ respectively). Note that, since Right has now picked $x_{k_0-1}^{\sf T}$ and $x_{k_0-1}^{\sf F}$, the blue guide-edges $\{x_{k_0}^{\sf T},x_{k_0}^{\sf F},x_{k_0-1}^{\sf T}\}$ and $\{x_{k_0}^{\sf T},x_{k_0}^{\sf F},x_{k_0-1}^{\sf F}\}$ are dead. Therefore, the updated variable gadget associated with $x_{k_0}$ and $y_{k_0}$ is as pictured in Figure \ref{fig:23-variable3}. The blue edge $\{t_{k_0},\tau_{k_0}\}$ comes from trap-edges in $T_{k_0}$ (for example, think of any trap-edge of the form $\{t_{k_0},\tau_{k_0},d_j\}$, since Left has picked $d_j$ while Right was killing the butterfly-edges).

            \begin{figure}[h]
                \centering
                \includegraphics[scale=.43]{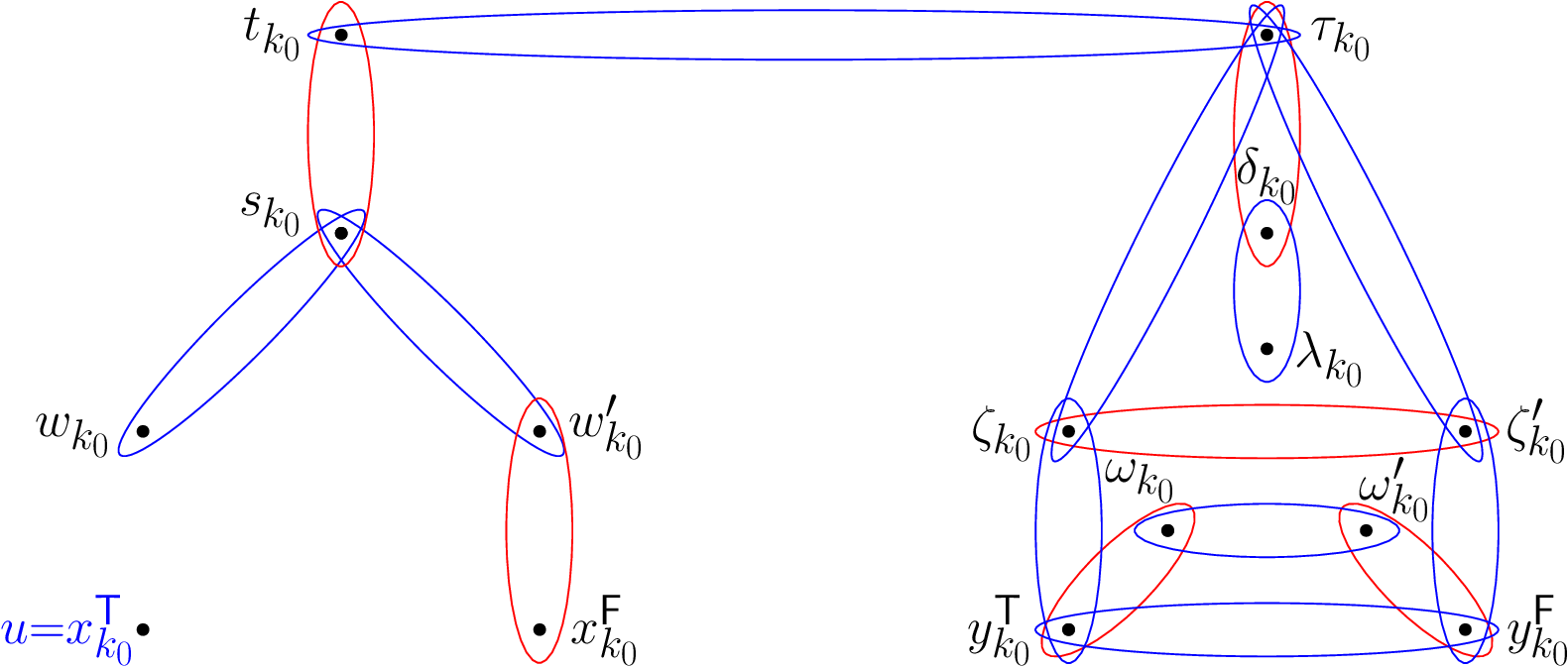}
                \caption{The updated variable gadget associated with $x_{k_0}$ and $y_{k_0}$ in the case $u=x_{k_0}^{\sf T}$ after Right has taken care of all the other variable gadgets.}\label{fig:23-variable3}
                \end{figure}
            
            \item[--] Finally, Right kills all remaining blue guide-edges in $G_{k_0}^L$ and trap-edges in $T_{k_0}$. She plays the following moves, which kill all remaining blue edges apart from the updated blue edge $\{w_{k_0},s_{k_0}\}$:
                \begin{center}
                \begin{tikzcd}[row sep = tiny , sep = small]
                    \arrow{r}{} & \textcolor{red}{w'_{k_0}} \arrow{r}{\text{f}} & \textcolor{blue}{x_{k_0}^{\sf F}} \arrow{r}{} & \textcolor{red}{y_{k_0}^{\sf T}} \arrow{r}{\text{f}} & \textcolor{blue}{\omega_{k_0}} \arrow{r}{\text{f}} & \textcolor{red}{\omega'_{k_0}} \arrow{r}{\text{f}} & \textcolor{blue}{y_{k_0}^{\sf F}} \arrow{r}{\text{f}} & \textcolor{red}{\zeta'_{k_0}} \arrow{r}{\text{f}} & \textcolor{blue}{\zeta_{k_0}} \arrow{r}{\text{f}} & \textcolor{red}{\tau_{k_0}} \arrow{r}{\text{f}} & \textcolor{blue}{\delta_{k_0}} \arrow{r}{\text{f}} & \textcolor{red}{\lambda_{k_0}}
                \end{tikzcd}
                \end{center}
                Right picks $w_{k_0}$ or $s_{k_0}$ with her next move to finish the job. \cqed\qedhere
            \end{itemize}
    \end{itemize}
\end{itemize}
\end{proof}

Putting claims \ref{cla:regular1}, \ref{cla:regular2} and \ref{cla:regular3} together, wee see that Left has a winning strategy on $\Ga$ as the first player if and only if Falsifier has a winning strategy for \QBF~on $\phi$ as the first player. We have $|V|=O(n+m)$, $|E_L|=O(n^2+nm)$ and $|E_R|=O(n+m)$, moreover the game $\Ga$ can be efficiently computed from the formula $\phi$, so this is indeed a polynomial-time reduction. This ends the proof of Theorem \ref{theo:32-new}.

\section{Conclusion}\label{section4}\strut
\indent When it comes to achievement positional games with blue edges of size at most~$p$ and red edges of size at most~$q$, we have shown that it is {\sf PSPACE}-complete to decide whether Left wins as the first player when $(p,q)=(3,2)$. In particular, the same is true for $(p,q)=(4,2)$. These two cases had only been proven to be {\NP}-hard in \cite{JonasFlorian}. The only remaining cases are $(p,q) \in \{(4,0),(4,1)\}$, which both correspond to Maker-Breaker games on hypergraphs of rank~$4$. 
While it is unlikely that an approach using general achievement positional games would help constructing a {\sf PSPACE}-hardness gadget for those due to Breaker's inability to create direct threats, the hardness of the case $(p,q)=(4,2)$ tightens the noose around this open problem.

We have also established {\sf PSPACE}-completeness for Maker-Maker games of rank~$4$. 
Previously, it was only known that Maker-Maker games of rank~$6$ were {\PSPACE}-complete.
Since Maker-Maker games of rank~$2$ are trivially solved in polynomial time, the case of rank~$3$ is the only one still open. An analogous {\sf PSPACE}-hardness gadget to that of the proof of Corollary \ref{coro:makermaker4} cannot be constructed for rank~$3$, as the case $(p,q)=(2,2)$ of achievement positional games is tractable. However, an approach using blue and red edges may still be useful. Indeed, edges of size~$3$ cannot be attributed to one particular player, but edges of size~$2$ can: they would initially be edges of size~$3$, which become strictly blue or red after some player has played inside them.

Another perspective would be to prove {\sf PSPACE}-completeness for $4$-uniform Maker-Maker games (actually, even 5-uniform Maker-Maker games are open in that regard). Our construction has rank~$4$, but it relies on the existence of smaller edges: one of size~$2$ and many of size~$3$. It would be easy to replace the edge of size~$2$ by two edges of size~$3$. However, getting rid of all edges of size~$3$ is not straightforward. With a $4$-uniform construction, Right would need to make two forced moves at the start of the game to ``activate'' the red edges of size~$2$, but then we would need a way to force the first two moves of Left as well: this is more difficult since, being the first player, Left has the advantage of making the first threat.

\bibliographystyle{biblio_style}
\bibliography{biblio}

\begin{thebibliography}{GGS22}
\providecommand{\url}[1]{\texttt{#1}}
\providecommand{\urlprefix}{}

\bibitem[Bec08]{Bec08}
J.~Beck.
\newblock \emph{Combinatorial Games: Tic-Tac-Toe Theory}.
\newblock Academic Press, Cambridge, 2008.
\newblock ISBN 978-0-521-46100-9.

\bibitem[Bys04]{byskov}
J.~M. Byskov.
\newblock {Maker-Maker and Maker-Breaker games are PSPACE-complete}.
\newblock \emph{BRICS Report Series}, 11(14), 2004.
\newblock \urlprefix\url{http://dx.doi.org/10.7146/brics.v11i14.21839}.

\bibitem[CE78]{CE78}
V.~Chv\'atal, P.~Erd\H{o}s.
\newblock Biased positional games.
\newblock \emph{Ann. Disc. Math.}, 2, 1978, pp. 221--229.
\newblock \urlprefix\url{http://dx.doi.org/10.1016/S0167-5060(08)70335-2}.

\bibitem[ES73]{erdos}
P.~Erdös, J.~Selfridge.
\newblock {On a combinatorial game}.
\newblock \emph{Journal of Combinatorial Theory, Series A}, 14, 1973.
\newblock \urlprefix\url{http://dx.doi.org/10.1016/0097-3165(73)90005-8}.

\bibitem[Gar59]{gardner}
M.~Gardner.
\newblock \emph{Hexaflexagons and Other Mathematical Diversions}, chap. The Game of Hex.
\newblock University Of Chicago Press, 2nd ed.
\newblock ISBN 978-0226282541, 1959.
\newblock pp. 38--40.

\bibitem[GGS22]{MBrank3}
F.~Galliot, S.~Gravier, I.~Sivignon.
\newblock {Maker-Breaker is solved in polynomial time on hypergraphs of rank 3. Preprint}, 2022.
\newblock \urlprefix\url{https://arxiv.org/abs/2209.12819}.

\bibitem[GS80]{Zetters}
R.~K. Guy, J.~L. Selfridge.
\newblock {Problem S10: solution by T. G. L. Zetters}.
\newblock \emph{The American Mathematical Monthly}, 87, 1980, pp. 575--576.
\newblock \urlprefix\url{http://dx.doi.org/10.2307/2321433}.

\bibitem[GS25]{JonasFlorian}
F.~Galliot, J.~S\'enizergues.
\newblock {A unified convention for achievement positional games. Preprint}, 2025.
\newblock \urlprefix\url{https://arxiv.org/abs/2503.18163}.

\bibitem[HJ63]{Hales1963}
A.~W. Hales, R.~I. Jewett.
\newblock Regularity and positional games.
\newblock \emph{Trans. Amer. Math. Soc.}, 106, 1963, pp. 222--229.
\newblock \urlprefix\url{http://dx.doi.org/10.1007/978-0-8176-4842-8_23}.

\bibitem[Imm88]{copspace}
N.~Immerman.
\newblock Nondeterministic space is closed under complementation.
\newblock \emph{SIAM Journal on Computing}, 17(5), 1988, pp. 935--938.
\newblock \urlprefix\url{http://dx.doi.org/10.1137/0217058}.

\bibitem[Koe25]{MBrank5}
F.~O. Koepke.
\newblock {Solving Maker-Breaker games on 5-uniform hypergraphs is PSPACE-complete}.
\newblock Preprint, 2025.
\newblock \urlprefix\url{https://arxiv.org/abs/2502.20271}.

\bibitem[RW20]{RW20}
M.~L. Rahman, T.~Watson.
\newblock {Complexity of unordered CNF games}.
\newblock \emph{ACM Trans. Comput. Theory}, 12(3), 2020.
\newblock \urlprefix\url{http://dx.doi.org/10.1145/3397478}.

\bibitem[RW21]{MBrank6}
M.~L. Rahman, T.~Watson.
\newblock {6-uniform Maker-Breaker game is PSPACE-complete}.
\newblock In \emph{38th International Symposium on Theoretical Aspects of Computer Science (STACS 2021)}, vol. 187 of \emph{LIPIcs}. Schloss Dagstuhl, Leibniz-Zentrum f{\"u}r Informatik, Dagstuhl, Germany, 2021.
\newblock pp. 57:1--57:15.
\newblock \urlprefix\url{http://dx.doi.org/10.4230/LIPIcs.STACS.2021.57}.

\bibitem[Sch78]{schaefer}
T.~J. Schaefer.
\newblock On the complexity of some two-person perfect-information games.
\newblock \emph{J. Comput. Syst. Sci.}, 16(2), 1978, pp. 185--225.
\newblock \urlprefix\url{http://dx.doi.org/10.1016/0022-0000(78)90045-4}.

\bibitem[SM73]{QBF}
L.~J. Stockmeyer, A.~R. Meyer.
\newblock Word problems requiring exponential time (preliminary report).
\newblock In \emph{Proceedings of the Fifth Annual ACM Symposium on Theory of Computing}, STOC '73. Association for Computing Machinery, New York, NY, USA, 1973.
\newblock pp. 1--–9.
\newblock \urlprefix\url{http://dx.doi.org/10.1145/800125.804029}.

\end{thebibliography}

\end{document}